\def\dOi{13(3:26)2017}
\tikzstyle{even}=[circle,draw,inner sep=2pt,thick,minimum size=7mm]
\tikzstyle{odd}=[regular polygon sides=3,regular polygon,draw,inner sep=1pt,thick,minimum size=4mm]
\tikzstyle{arrow}=[->,line width=1pt,>=stealth',bend angle=20]
\tikzstyle{label}=[draw,circle,fill,inner sep=1.2pt,solid]
\definecolor{lightgray}{rgb}{0.85,0.85,0.85}
\colorlet{DarkRed}{red!50!black}
\colorlet{DarkGreen}{green!50!black}
\colorlet{DarkBlue}{blue!50!black}
\declaretheorem[numberwithin=section]{theorem}
\declaretheorem[numberlike=theorem]{lemma}
\declaretheorem[numberlike=theorem]{corollary}
\declaretheorem[numberlike=theorem]{remark}
\declaretheorem[numberlike=theorem]{example}
\newcommand{\set}[1]{\{#1\}}
\newcommand{\lu}{\textup{(}}
\newcommand{\ru}{\textup{)}}
\newcommand{\upbr}[1]{\lu #1\ru}
\newcommand{\Nats}{\mathbb{N}\xspace}
\newcommand{\sseq}{\langle v_0,v_1,v_2,\ldots\rangle}
\newcommand{\pat}{\omega\xspace} 
\newcommand{\obj}{\phi\xspace}
\newcommand{\idxs}{j\xspace}
\newcommand{\rev}{\textit{RevG}\xspace}
\newcommand{\scc}{C\xspace}
\newcommand{\inscc}{\expandafter\MakeLowercase\expandafter{\scc}\xspace} 
\DeclareMathOperator{\Out}{\textit{Out}\xspace}
\DeclareMathOperator{\In}{\textit{In}\xspace}
\DeclareMathOperator{\OutDeg}{\textit{Outdeg}\xspace}
\DeclareMathOperator{\InDeg}{\textit{Indeg}\xspace}
\newcommand{\bsccalg}{\ProcNameSty{SmallestBSCC}}
\newcommand{\reachG}[2]{\ProcNameSty{GraphReach}(#1, #2)}
\newcommand{\blue}{\textit{Bl}\xspace}
\newcommand{\gc}{X\xspace} 
\newcommand{\idxg}{i\xspace} 
\newcommand{\at}[3]{\textit{Attr}_{#1}(#2, #3)\xspace}
\newcommand{\idxa}{i\xspace}
\newcommand{\straa}{\sigma\xspace}
\newcommand{\strab}{\pi\xspace}
\newcommand{\pe}{{\mathcal{E}}\xspace}
\newcommand{\po}{{\mathcal{O}}\xspace}
\newcommand{\prio}{\alpha\xspace}
\newcommand{\numprio}{c\xspace}
\newcommand{\ve}{V_\pe\xspace}
\newcommand{\vo}{V_\po\xspace}
\newcommand{\we}{W_\pe\xspace}
\newcommand{\wo}{W_\po\xspace}
\newcommand{\pl}{{z}\xspace}
\newcommand{\op}{{\overline{z}}\xspace}
\newcommand{\pgame}{\mathcal{P}\xspace}
\newcommand{\game}{\mathcal{G}\xspace}
\newcommand{\domsize}{h\xspace}
\newcommand{\dombound}[2]{h\xspace} 
\DeclareMathOperator{\domalg}{\ProcNameSty{Dominion}\xspace}
\DeclareMathOperator{\progress}{\ProcNameSty{ProgressMeasure}\xspace}
\newcommand{\streettalg}{{\textsc{Streett}}\xspace}
\newcommand{\remove}{\ProcNameSty{Remove}}
\newcommand{\bad}{\ProcNameSty{Bad}}
\newcommand{\construct}{\ProcNameSty{Construct}}
\newcommand{\bits}{\textit{bits}\xspace}
\newcommand{\ds}{\mathit{D}\xspace}
\newcommand{\liste}{Q\xspace}
\begin{document}

\title[Improved Algorithms for Parity and Streett objectives]
      {Improved Algorithms for Parity and Streett objectives\rsuper*}

\author[K.~Chatterjee]{Krishnendu Chatterjee\rsuper a}
\address{{\lsuper a}IST Austria, Klosterneuburg, Austria}
\email{krishnendu.chatterjee@ist.ac.at}
\thanks{{\lsuper{a,b,c}}The authors are partially supported by the
  Vienna Science and Technology 
Fund (WWTF) grant ICT15-003 and the Austrian Science Fund (FWF): P23499-N23.}
\thanks{{\lsuper a}Partially 
supported by the Austrian Science Fund (FWF): S11407-N23 (RiSE/SHiNE), 
an ERC Start Grant (279307: Graph Games), and a Microsoft Faculty Fellows Award.}

\author[M.~Henzinger]{Monika Henzinger\rsuper b}
\address{{\lsuper b}University of Vienna, Faculty of Computer Science, Vienna, Austria}
\email{monika.henzinger@univie.ac.at}
\thanks{{\lsuper{b,c}}The research leading to these results has received funding from the European 
Research Council under the European Union's Seventh Framework Programme 
(FP/2007-2013) / ERC Grant Agreement no. 340506 and the Vienna Science and 
Technology Fund (WWTF) grant ICT10-002.}

\author[V.~Loitenbauer]{Veronika Loitzenbauer\rsuper c}
\address{{\lsuper c}Bar-Ilan University, Ramat Gan, Israel}
\email{veronika@datalab.cs.biu.ac.il}
\thanks{{\lsuper c}Work done while at University of Vienna.}

\keywords{Computer-aided verification; Synthesis; Graph games; Parity games; Streett automata; Graph algorithms.}
\subjclass{F.2.2 Nonnumerical Algorithms and Problems, F.3.1 Specifying and Verifying and Reasoning about Programs}
\titlecomment{{\lsuper*}A preliminary version of this paper appeared in~\cite{ChatterjeeHL15}.}


\begin{abstract}
\noindent The computation of the winning set for parity objectives and for 
Streett objectives in graphs as well as in game graphs
are central problems in computer-aided verification, with application to 
the verification of closed systems with strong fairness conditions,
the verification of open systems, checking interface compatibility, 
well-formedness of specifications, and the synthesis of reactive systems. 
We show how to compute the winning set on $n$ vertices 
for (1)~parity-3 (aka one-pair Streett) objectives in game graphs
in time~$O(n^{5/2})$
and for (2)~k-pair Streett objectives in graphs
in time~$O(n^2 + nk \log n)$. 
For both problems this gives faster algorithms for dense graphs and 
represents the first improvement in asymptotic running 
time in 15 years. 
\end{abstract}

\maketitle

\section{Introduction}

In the formal verification and synthesis of systems \emph{graphs} and \emph{game graphs}
are fundamental models of systems, where vertices correspond to states
of the systems and edges correspond to transitions between states. \emph{$\omega$-regular
objectives} are a canonical way to specify desired and undesired behaviors
of systems, and the algorithmic questions are to determine whether a model
satisfies its specification or to generate a strategy to satisfy the specification.
In this work we study \emph{Streett} and \emph{parity} objectives
that can express all $\omega$-regular objectives and the algorithmic 
questions we consider are therefore core questions in formal verification and synthesis.
We first define the problem, next discuss its significance and previous work,
and then present our contributions.

\smallskip\noindent\emph{Game graphs and graphs.}
Consider a directed graph $(V,E)$ with a partition $(V_1, V_2)$ of $V$, which is
called a \emph{game graph}. Let $n = |V|$ and $m = |E|$. 
On the graph two players play the following \emph{alternating game}
that forms an infinite path. They start by placing a token on an initial 
vertex and then take turns indefinitely in moving the token: At a vertex 
$v \in V_1$ player 1 moves the token along one of the outedges of~$v$, at a 
vertex $u \in V_2$ player 2 moves the token along one of the outedges of~$u$. 
If $V_2=\emptyset$, then we simply have a standard graph.

\smallskip\noindent\emph{Objectives and winning sets.} Objectives are subsets of infinite paths
that specify the desired set of paths for player~1, and the objective for player~2 is the 
complement of the player-1 objective (i.e., we consider \emph{zero-sum} games). Given an objective $\obj$, 
an infinite path \emph{satisfies the objective} if it belongs to $\obj$.
Given a starting vertex $x \in V$ and an objective $\obj$, if player~1 can guarantee that
the infinite path starting at $x$ satisfies $\obj$,  \emph{no matter what choices player~2 makes,} 
then player~1 can \emph{win from $x$} and $x$ belongs to the \emph{winning set of player~1}. 
The winning sets partition the game graph, i.e., the complement of the winning 
set for player~1 is the winning set for player~2.
In case the game graph is a standard graph (i.e., $V_2=\emptyset$), the \emph{winning set} 
consists of those vertices $x$ such that there exists an infinite path starting at $x$ 
that satisfies $\obj$.
The winning set computation 
for game graphs is more involved than for standard graphs due to 
the presence of the adversarial player~2.

\smallskip\noindent\emph{Relevant objectives.}
The most basic objective is \emph{reachability} where, given a set $U \subseteq V$
of vertices, an infinite path satisfies the objective if the path visits a vertex
of $U$ \emph{at least once}.
The next interesting objective is the \emph{Büchi} objective that requires an 
infinite path to visit some vertex of $U$ \emph{infinitely often}.
The next and a very central objective in formal verification and automata theory is the 
\emph{one-pair Streett objective} that consists of a pair $(L_1,U_1)$ of sets of vertices 
(i.e., $L_1 \subseteq V$ and $U_1 \subseteq V$), and an infinite path satisfies the objective 
iff the following condition holds:
if some vertex of $L_1$ is visited infinitely often, then some vertex of $U_1$ is visited
infinitely often (intuitively the objective specifies that if one Büchi objective holds,
then another Büchi objective must also hold).
A generalization of one-pair Streett objectives is the \emph{$k$-pair Streett objective}
(aka \emph{general Streett objective})
that consists of $k$-Streett pairs $(L_1,U_1), (L_2,U_2), \ldots, (L_k,U_k)$, and 
an infinite path satisfies the objective iff the condition for every Streett pair is 
satisfied (in other words, the objective is the conjunction of $k$ one-pair Streett objectives).
A different generalization of one-pair Streett objectives are \emph{parity objectives}.
For a parity objective the input additionally contains a priority function that 
assigns each vertex a natural number called priority. The parity objective is 
satisfied if the highest priority visited infinitely often is even. 
Parity objectives with at most 2~different priorities are equivalent to Büchi objectives
and parity objectives with at most 
3~different priorities are equivalent to one-pair Streett objectives.

We study 
(1)~game graphs with parity-3 (aka one-pair Streett) objectives and their generalization 
to parity objectives, and (2)~graphs with general Streett objectives.

\smallskip\noindent\emph{Significance in verification.} 
Two-player games on graphs 
are useful in many problems in computer science,
specially in the verification and synthesis of systems such as the 
synthesis of reactive systems~\cite{Church62,PnueliR89,RamadgeW87},  
the verification of open systems~\cite{AlurHK02}, and checking interface 
compatibility~\cite{InterfaceAutomata} and the well-formedness of 
specifications~\cite{Dill89book}, and many others. 
General and one-pair Streett objectives are central in verification as most 
commonly used specifications can be expressed as Streett 
automata~\cite{Safra88,Thomas97}. 
Parity objectives are also canonical to express properties in verification, 
since every Streett automaton can be converted to a parity automaton~\cite{Safra92}. 
Moreover, parity objectives are particularly important 
as solving parity games is equivalent to $\mu$-calculus model checking~\cite{EmersonJ91}.
Dense game graphs can emerge from a synchronous product of several components 
(where each component makes transitions at each step)~\cite{KuijperP09,ChatterjeeGIP16}.

\emph{Game graphs with parity-3 objectives} arise in many applications in 
verification. We sketch a few of them.
(A)~Timed automaton games are a model for real-time systems.
The analysis of such games with reachability and safety objectives 
(which are the dual of reachability objectives) reduces to 
game graphs with parity-3 objectives~\cite{deAlfaroFHMS03,deAlfaroF07,ChatterjeeHP11,ChatterjeeP13}. 
(B)~The synthesis of Generalized Reactivity(1) (aka GR(1)) specifications exactly 
require the solution of game graphs with parity-3 objectives~\cite{BloemCGHJ10};
GR(1) specifications are standard for hardware synthesis~\cite{PitermanPS06} and even used in 
the synthesis of industrial protocols~\cite{GodhalCH13,BloemJPPS12}\footnote{A GR(1) specification 
expresses that if a conjunction of Büchi objectives holds, then another 
conjunction of Büchi objectives must also hold, and since conjunction of 
Büchi objectives can be reduced in linear time to a single Büchi 
objective, a GR(1) specification reduces to implication between two 
Büchi objectives, which is an parity-3 objective.}.
(C)~Finally, the problem of fair simulation~\cite{HenzingerKR02} between two systems also 
reduces to game graphs with parity-3 objectives~\cite{EtessamiWS05,ChatterjeeCK12}.

\emph{General Streett objectives in standard graphs} arise, for example, in the 
verification of closed systems with strong fairness 
conditions~\cite{LatvalaH00,DuretLutzPC09,Francez86}.
In program verification,
a scheduler is \emph{strongly fair} if every event that is enabled infinitely often 
is scheduled infinitely often. Thus,
verification of systems with strong fairness conditions directly corresponds to checking 
the non-emptiness of Streett automata, which in turn corresponds to determining the winning
set in standard graphs with Streett objectives. Note, however, that
a Streett objective can either specify desired behaviors of the system or 
erroneous ones, and for erroneous specifications, it is useful to have 
a \emph{certificate} (as defined below) to identify an error trace of the system~\cite{Ehlers10,LatvalaH00,DuretLutzPC09}, such as in the counterexample-guided 
abstraction refinement approach (CEGAR)~\cite{ClarkeGJLV03}.

Note that \emph{standard graphs} are relevant for the verification of 
\emph{closed} systems or \emph{open} systems with demonic non-determinism 
(e.g., all inputs are from the environment that are not controllable); 
while \emph{game graphs} are relevant for the synthesis and 
verification of \emph{open} systems with both angelic and demonic 
non-determinism (e.g., certain inputs are controllable, and certain inputs
are not controllable).

\smallskip\noindent\emph{Previous results.} 
We summarize the previous results for game graphs and graphs with Streett
and Parity objectives. 

\emph{Game graphs.} We consider the computation of the winning set for player~1 in game graphs.
For \emph{reachability} objectives, the problem is PTIME-complete, and the computation can be 
achieved in time linear in the size of the graph~\cite{Beeri80,Immerman81}.
For \emph{Büchi} objectives, the current best known algorithm requires $O(n^2)$ time~\cite{ChatterjeeH14}.
For \emph{general Streett} objectives, the problem is coNP-complete~\cite{EmersonJ88}, and for \emph{one-pair 
Streett} objectives the current best known algorithm requires $O(m n)$ time~\cite{Jurdzinski00}.  
One-pair Streett objectives also corresponds to the well-known parity games problem 
with three priorities. Despite the importance of game graphs with parity-3 objectives in numerous applications and several algorithmic ideas to improve the 
running time for general parity games~\cite{VogeJ00,JurdzinskiPZ08,Schewe17} or Büchi games~\cite{ChatterjeeJH03,ChatterjeeH14}, 
there has been no algorithmic improvement since 2000~\cite{Jurdzinski00} for parity-3 games. The parity games problem in general is in UP $\cap$ coUP~\cite{Jurdzinski98}; 
it is one of the rare and intriguing combinatorial problems that lie in NP $\cap$ 
coNP but are not known to be in PTIME. 
Parity games can be solved by a randomized algorithm in time 
$n^{O(\sqrt{n/\log n})}$~\cite{BjorklundSV03} and
by deterministic algorithms in time $n^{O(\sqrt{n})}$~\cite{JurdzinskiPZ08}
and for~$\numprio\ge 3$ priorities in time $O(m \cdot (\kappa n / c^2)^{\gamma(\numprio)})$~\cite{Schewe17}
for a small constant $\kappa$ and $\numprio/3 \le \gamma(\numprio) \le \numprio/3 
+ 1/2$. Subsequent to our work, a quasi-polynomial time algorithm 
for parity games was achieved in a breakthrough result by 
Calude et.\ al~\cite{CaludeJKLS17,GimbertI17}. In follow-up work different
quasi-polynomial time algorithms as well as $O(m n \log(n)^{\numprio-1})$ time
algorithms for constant $\numprio$ were shown \cite{JurdzinskiL17,FearnleyJSSW17}.

\emph{Graphs.} In graphs the winning set for parity objectives with 
$\numprio$ priorities can be computed in $O(m \log{\numprio})$ 
time~\cite{ChatterjeeH11}. We study the computation of the winning 
set for general Streett objectives. If $x$ belongs to the winning set, it is 
often useful to output a \emph{certificate} for $x$. Let $S$ be a vertex set 
reachable from $x$ that induces a strongly connected subgraph such that for all 
$1 \leq \idxs \leq k$ we have either $S \cap L_\idxs=\emptyset$ or 
$S \cap U_\idxs \neq \emptyset$ (i.e., if $S$ contains a vertex from $L_\idxs$ 
then it also contains a vertex from $U_\idxs$). 
A certificate is a ``lasso-shaped'' path that consists of a path to $S$ and a 
(not necessarily simple) cycle between the vertices of~$S$~\cite{Ehlers10}. 
The basic algorithm~\cite{EmersonL87,LichtensteinP85} for the winning set problem 
has an asymptotic running time of $O((m+b)\min(n,k))$ with 
$b = \sum_{\idxs=1}^k (\lvert L_\idxs \rvert + \lvert U_\idxs \rvert) \le 2nk$. 
Within the same time bound Latvala and Heljanko~\cite{LatvalaH00} additionally
compute a certificate of size at most $n\min(n,2k)$.
Duret-Lutz~et~al.~\cite{DuretLutzPC09} presented a space-saving ``on-the-fly'' 
algorithm with the same time complexity for the slightly different transition-based 
Streett automata.
The current fastest algorithm for the problem by Henzinger and Telle~\cite{HenzingerT96} 
from 1996
has a running time of $O(m \min(\sqrt{m \log n}, k, n) + b \min(\log n, k))$;  
however, given a start vertex $x$, to report the certificate for $x$
adds an additive term of $O(n\min(n,k))$ to the running time bound.


\smallskip\noindent\emph{Our contributions.} In this work our contributions are two-fold.

\emph{Game graphs.} We show that the winning set computation for game graphs with 
parity-3 objectives can be achieved in $O(n^{5/2})$ time.
Our algorithm is faster for $m \in \Omega( n^{1.5})$, and breaks the long-standing 
$O(m n)$ barrier for dense graphs.
Our algorithm for parity-3 games also extends to general parity games 
and improves the running time for dense graphs when the number of priorities 
is sub-polynomial in $n$. Let, as in~\cite{Schewe17},
$\gamma(\numprio) = \numprio/3 + 1/2 - 
4/(\numprio^2 - 1)$ for odd $\numprio$ and $\gamma(\numprio) = 
\numprio/3 + 1/2 - 1/(3 \numprio) - 4/\numprio^2$ for even $\numprio$, and let
$\beta(\numprio) = \gamma(\numprio)/(\lfloor\numprio / 2\rfloor + 1)$.
We obtain that the running time of our algorithm is 
$O(n^{1+\gamma(\numprio+1)}) = O(n^{2 + \gamma(\numprio) - \beta(\numprio)})$ 
for parity games with $\numprio$ priorities, i.e., for a constant number 
of priorities we replace $m$ of~\cite{Schewe17} by $n^{2-\beta(\numprio)}$.
Since the value of $\beta(\numprio)$ quickly approaches $2/3$ with
increasing~$\numprio$, we have that $n^{2-\beta(\numprio)}$ approaches $n^{4/3}$.
For small $\numprio$ we compare our running times with the Big-Step algorithm 
of~\cite{Schewe17} in Table~\ref{tab:comparison}.
\begin{table*}[!t]
\renewcommand{\arraystretch}{1.3}
\caption{Comparison of Running Times for Few Priorities.}\label{tab:comparison}
\centering
\begin{tabular}{@{}lccccc@{}}
\toprule
& \multicolumn{5}{c}{\# priorities}\\
\cmidrule{2-6}
 & 3 & 4 & 5 & 6 & 7 \\
\midrule
Big-Step~\cite{Schewe17} & 
$O(m n)$ & $O(m n^{3/2})$ & $O(m n^{2})$ & $O(m n^{7/3})$ & $O(m n^{11/4})$\\

Big-Step~\cite{Schewe17} with $m = \Theta(n^2)$ & 
$O(n^3)$ & $O(n^{7/2})$ & $O(n^{4})$ & $O(n^{13/3})$ & $O(n^{19/4})$\\

Our algorithm & 
$O(n^{5/2})$ & $O(n^{3})$ & $O(n^{10/3})$ & $O(n^{15/4})$ & $O(n^{65/16})$\\
\bottomrule
\end{tabular}
\end{table*}


\emph{Graphs.}
We present an algorithm with $O(n^2 + b \log n)$ running time for the 
winning set computation in graphs with general Streett objectives, 
which is faster for $m \in \Omega( \max(n^{4/3} \log^{-1/3} n, b^{2/3} \log^{1/3} n))$ 
and $k \in \Omega(n^2 / m)$.
We additionally provide an algorithm that, given the winning set,
computes a certificate 
for a vertex $x$ in the winning set in time $O(m + n  \min(n, k))$.
We also provide an example where the smallest certificate
has size $\Theta(n  \min(n, k))$, showing that no algorithm can 
compute and \emph{output} a certificate faster.
In contrast to~\cite{HenzingerT96}, the running time of our algorithm for the winning 
set computation does not change with certificate reporting.
Thus when certificates need to be reported and $k=\Omega(n)$, our algorithm is \emph{optimal} 
up to a factor of $\log n$ as the size of the input is at least $b$ and 
the size of the output is $\Omega(n^2)$.

\smallskip\noindent\emph{Technical contributions.}
Both of our algorithms use a \emph{hierarchical \upbr{game} graph 
decomposition} technique that 
was developed 
by Henzinger et al.~\cite{HenzingerKW99} 
to handle \emph{edge deletions} in 
\emph{undirected} graphs. In~\cite{ChatterjeeH14} it was extended 
to deal with \emph{vertex deletions} in \emph{directed}
and \emph{game} graphs. We combine and extend this technique in two ways.

\emph{Game graphs.} The classical algorithm for parity-3 objectives repeatedly solves Büchi games such that the union of the winning sets of player~2 in the Büchi games is exactly the winning set for the parity-3 objective. 
Schewe~\cite{Schewe17} showed with his Big-Step algorithm that the small 
progress measure algorithm for parity games by 
Jurdzi\'nski~\cite{Jurdzinski00} can be used to compute small subsets of the 
winning set of player~2, called \emph{dominions}, and thereby improved the 
running time for general parity games. However, the Big-Step algorithm
does not improve the running time for parity-3 games. 
With Schewe's approach dominions with at 
most $\domsize$~vertices in Büchi games can be found in time $O(m \domsize)$. 
We extend this approach by using the hierarchical game graph decomposition 
technique to find small dominions quickly and call the $O(n^2)$ Büchi game 
algorithm of~\cite{ChatterjeeH14} for large dominions. 
This extension is possible as we are able to show that, rather surprisingly, 
it is sufficient to consider game graphs with $O(n \domsize)$ edges to detect 
dominions of size $\domsize$. Our approach extends to general parity games.

\emph{Graphs.} In prior work that used the hierarchical graph 
decomposition technique the running time analysis relied on the fact that
identified vertex sets that fulfilled a certain desired condition
were \emph{removed} from 
the (game) graph after their detection. The work for identifying the vertex set was 
then charged in an amortization argument to the
removed vertex set. This is not possible for general Streett objectives on graphs, 
where a strongly connected subgraph is identified and some but not all of 
its vertices might be removed. As a consequence a vertex might belong to 
an identified strongly connected subgraph multiple times. We 
show how to overcome this difficulty by identifying a strongly
connected subgraph with at most half of the vertices whenever a vertex set 
seizes to be strongly connected. We identify these strongly connected
subgraphs by running Tarjan's SCC algorithm~\cite{Tarjan72} on the graph and its 
\emph{reverse graph}, thereby finding the smallest \emph{top} (i.e. with no incoming edges) or
\emph{bottom} (i.e. with no outgoing edges) SCC contained in the formerly strongly
connected subgraph. The algorithm takes $O(n)$ time per vertex in the identified 
set. This will allow us to bound the total running time for this part of the
algorithm with~$O(n^2)$.

In Section~\ref{sec:parity} we present our algorithm for game graphs with 
parity objectives with $\numprio$ priorities, where
the special case of $\numprio = 3$ corresponds to one-pair Streett
objectives. In Section~\ref{sec:streett} we present the algorithm
for general Streett objectives in graphs.

\section{Parity Objectives in Game Graphs}\label{sec:parity}

\subsection{Preliminaries}

\smallskip\noindent\emph{Notation.}
Let for all $\numprio \in \Nats$ denote the set $\set{0, 1, \dots, \numprio - 1}$
by $[\numprio]$. 

\smallskip\noindent\emph{Parity games.}
A \emph{parity game} $\pgame = (\game, \prio)$ with $\numprio \le n$ priorities
consists of a \emph{game graph} $\game = (G, (\ve, \vo))$ with $G = (V, E)$
and a \emph{priority function} $\prio: V \rightarrow 
[\numprio]$ that assigns an integer from the set $[c]$ 
to each vertex. 
The sets $\ve \subseteq V$ and $\vo \subseteq V$ form a partition of $V$.
We denote the two players by $\pe$ (for even) and $\po$ (for odd).
We say that the vertices in $\ve$ are \emph{$\pe$}-vertices
and the vertices in $\vo$ are \emph{$\po$}-vertices.
Player~$\pe$ (resp.\ player~$\po$) wins a play if the \emph{highest} priority 
occurring infinitely often in the play is even (resp.\ odd).
 We use $\pl$ to denote 
one of the players $\set{\pe,\po}$ and $\op$ to denote her opponent.
\emph{Parity-3 games} are parity games with $\prio: V 
\rightarrow \set{0,1,2}$ and \emph{B\"uchi games} have $\prio: V \rightarrow \set{0,1}$, 
where the vertices in the set $B = \set{v \mid \prio(v) = 1}$ are called \emph{B\"uchi vertices}. 

\begin{figure}
\centering
\begin{tikzpicture}
\matrix[column sep=8mm, row sep=8mm]{
	\node[even] (1) {$0$};
	& \node[odd] (2) {$0$};
	& \node[even] (3) {$2$};
	& \node[odd] (4) {$1$};
	& \node[odd] (5) {$1$};\\
	\node[even] (6) {$1$};
	& \node[odd] (7) {$1$};
	& \node[odd] (8) {$0$};
	& \node[odd] (9) {$2$};
	& \node[even] (10) {$1$};\\
};
\path (1) edge[arrow] (2)
					edge[arrow] (7)
					edge[arrow] (8)
					edge[arrow, bend left] (6)
			(6) edge[arrow, bend left] (1)
			(2) edge[arrow, bend left] (7)
					edge[arrow] (3)
			(7) edge[arrow] (6)
			(8) edge[arrow] (2)
			(9) edge[arrow] (8)
					edge[arrow, bend left] (3)
			(3) edge[arrow, bend left] (9)
					edge[arrow, bend left] (4)
					edge[arrow, bend left] (5)
			(4) edge[arrow, bend left] (3)
			(5) edge[arrow, bend left] (10)
			(10) edge[arrow, bend left] (5)
					edge[arrow] (9);
\end{tikzpicture}
\caption{An example of a parity game $\pgame = (\game, \prio)$ with three priorities. 
Circles denote $\pe$-vertices, triangles denote $\po$-vertices. 
The values in the vertices are the priorities.}\label{fig:game}
\end{figure}

\smallskip\noindent\emph{One-pair Streett and parity-3 games.}
A one-pair Streett objective with pair $(L_1, U_1)$ is equivalent to a parity 
game with three priorities. 
Let the vertices in $U_1$ have priority~$2$, let the vertices
in $L_1 \setminus U_1$ have priority~$1$, and let the remaining vertices 
have priority~$0$. Then player~1 wins the game with the one-pair Streett 
objective if and only if player~$\pe$ wins the parity-3 game. 
As our algorithm for parity-3 games extends to
general parity games, we use the notion of parity games (i.e., 
player~$\pe$ and player~$\po$ instead of player~1 and player~2).

\smallskip\noindent\emph{Plays.}
For technical convenience we consider that every vertex of the game graph 
$\game$ has at least one outgoing edge. 
A game is initialized by placing a token on a vertex. Then the two
players form an infinite path called \emph{play} in the game graph by 
moving the token along the edges. Whenever the token is on a vertex of~$V_\pl$, 
player~$\pl$ moves the token along one of the outgoing edges of the vertex.
Formally, a \emph{play} is an infinite sequence $\sseq$ of vertices such that
$(v_\ell,v_{\ell+1}) \in E$ for all $\ell \geq 0$. 

For a vertex $u\in V$, we write $\Out(G, u)=\set{v\in V \mid (u,v) \in E}$ 
for the set of successor vertices of~$u$ in $G$ 
and $\In(G, u)=\set{v \in V \mid (v,u) \in E}$ for the set of predecessor 
vertices of~$u$ in $G$. We denote by $\OutDeg(G, u)=|\Out(G, u)|$ 
the number of outgoing edges from~$u$, and by $\InDeg(G, u)=|\In(G, u)|$ 
the number of incoming edges. We omit the reference to $G$ if it is clear from 
the context.

\smallskip\noindent\emph{Strategies.}
A \emph{strategy} of a player~$\pl \in \set{\pe,\po}$ is a 
function that, given a finite prefix of a play ending at $v \in V_\pl$, 
selects a vertex from $\Out(v)$ to extend the play. 
\emph{Memoryless strategies} do not depend on the history of a play but only on the current vertex. That is, a memoryless 
strategy of player~$\pl$ is a function $\straa: V_\pl \rightarrow V$ such that for 
all $v \in V_\pl$ we have $\straa(v) \in \Out(v)$. It is well-known that for parity games it is sufficient
to consider memoryless strategies (see Theorem~\ref{thm:determinacy} below). 
Therefore we only consider memoryless strategies from now on.
A start vertex~$v$, a strategy~$\straa$ for $\pe$, and a 
 strategy~$\strab$ for $\po$ define a unique play $\pat(v,\straa,
\strab)=\sseq$ as follows: 
$v_0=v$ and for all $j \geq 0$, if $v_j \in \ve$, then $\straa(v_j)=v_{j+1}$, 
and if $v_j \in \vo$, then $\strab(v_j)=v_{j+1}$.

\smallskip\noindent\emph{Winning strategies and sets.}
A strategy~$\straa$ is winning for player~$\pl$ at start vertex~$v$ if the 
resulting play is winning for player~$\pl$ irrespective of the strategy of 
player~$\op$. A vertex~$v$ belongs to 
the \emph{winning set} $W_\pl$ of player~$\pl$ if player~$\pl$ has a winning 
strategy from $v$. By the following theorem every vertex is winning for exactly 
one of the two players. 
When required for explicit reference of a specific game graph~$\game$ or 
specific parity game~$\pgame$, we use $W_\pl(\game)$ and $W_\pl(\pgame)$ to refer to the 
winning sets. The algorithmic question for parity games is to compute the 
sets~$\we$ and~$\wo$. 
\begin{theorem}[\cite{EmersonJ91,McNaughton93}]\label{thm:determinacy}
For every parity game the vertices $V$ can be partitioned into the winning 
set~$\we$ of player~$\pe$ and the winning set~$\wo$ of player~$\po$. 
Each player has a memoryless strategy that is winning for her from 
all vertices in her winning set.
\end{theorem}

For the analysis of our algorithm we further introduce the notions of \emph{traps}, \emph{attractors}, and \emph{dominions}.

\smallskip\noindent\emph{Traps.}
A set $U \subseteq V$ is a \emph{$\pl$-trap} if for all $\pl$-vertices 
$u$ in $U$ we have $\Out(u) \subseteq U$ and for all $\op$-vertices 
$v$ in $U$ there exists a vertex $w \in \Out(v) \cap U$~\cite{Zielonka98}.
Note that player~$\op$ can ensure that a play that currently ends in a $\pl$-trap 
never leaves the $\pl$-trap against any strategy of player~$\pl$ by choosing an 
edge $(v,w)$ with $w \in \Out(v) \cap U$ whenever the current 
vertex~$v$ is in $U \cap V_\op$.
Given a game graph~$\game$ and a $\pl$-trap~$U$, we denote by $\game[U]$ 
the game graph induced by the set of vertices~$U$. Note that given that in 
$\game$ each vertex has at least one outgoing edge, the same property holds 
for $\game[U]$. By a slight abuse of notation,
we denote the sub-game induced
by $U$ by $(\game[U], \prio)$, where the priority function~$\prio$
is evaluated only on~$U$ and we say that the highest priority of $(\game[U], \prio)$
is $\max_{v \in U} \prio(v)$.

\smallskip\noindent\emph{Attractors.}
In a game graph $\game$, a $\pl$-\emph{attractor} $\at{\pl}{\game}{U}$ of a set 
$U \subseteq V$ is the set of vertices from which player~$\pl$ has a strategy 
to reach $U$ against all strategies of player~$\op$. We have that $U \subseteq 
\at{\pl}{\game}{U}$. 
A $\pl$-attractor can be constructed inductively as follows: Let $Z_0=U$; and for all $\idxa \ge 0$ let 
\begin{equation}\label{eq:attr}
\tag{\ddag}
\begin{split}
	Z_{\idxa+1} = Z_\idxa &\cup \set{v \in V_\pl \mid \Out(v) \cap Z_\idxa \neq \emptyset} \\
    &\cup \set{v \in V_\op \mid \Out(v) \subseteq Z_\idxa}.
\end{split}
\end{equation}
Then $\at{\pl}{\game}{U}= \bigcup_{\idxa\ge 0} Z_\idxa$.

The following lemma summarizes two well-known facts about attractors and 
traps that we use frequently.
\begin{lemma}\label{lem:attr}
Let $U \subseteq V$.
\begin{enumerate}
\item \cite{Beeri80,Immerman81} The attractor $A = \at{\pl}{\game}{U}$ can be computed in 
$O(\sum_{v \in A} \lvert\In(v)\rvert)$ 
time.\label{sublem:attrtime}
\item \cite[Lemma~4]{Zielonka98} Let~$\game$ be a game graph in which each vertex has at least one outgoing 
edge. 
Then the set $V \setminus \at{\pl}{\game}{U}$ is a $\pl$-trap in~$\game$.\label{sublem:complattr}
\end{enumerate}
\end{lemma}

\smallskip\noindent\emph{Dominions.}
A non-empty set of vertices $D \subseteq V$ is a 
$\pl$-\emph{dominion} if player~$\pl$ has a winning strategy from every vertex 
of $D$ that also ensures only vertices of $D$ are visited. Note that a $\pl$-dominion
is also a $\op$-trap and that the $\pl$-attractor of a $\pl$-dominion
is again a $\pl$-dominion. 
In parity games dominions of size $\lvert D \rvert \le \domsize + 1$ can be 
computed by running the small-progress measure algorithm of
Jurdzi\'nski~\cite{Jurdzinski00} with a reduced codomain~\cite{Schewe17}.
We use this algorithm, whose properties are stated in the lemma below, as a subroutine.
\begin{lemma}[\cite{Schewe17}]\label{lem:progress}
Let $(\game, \prio)$ be a parity game with game graph $\game$, 
$n$~vertices, $m$~edges, a priority function~$\prio$, and $\numprio$ priorities.
There is an algorithm $\progress(\game, \prio, \domsize, \pl)$ that
returns the union of all $\pl$-dominions of size at 
most~$\domsize+1$, including a winning strategy for $\pl$,
in time $O(\numprio \cdot m \cdot \binom{\domsize + \lceil \numprio/2 \rceil} {\domsize})$.
\end{lemma}

The lemma below summarizes some well-known facts about dominions and winning sets.
\begin{lemma}\label{lem:doms}
The following assertions hold for game graphs~$\game$ with
at least one outgoing edge per vertex 
and \upbr{in particular} parity objectives. Let $U \subseteq V$.
\begin{enumerate}
\item \cite[Lemma~4.4]{JurdzinskiPZ08}, \cite[Lemma~2.6.11]{Loitzenbauer16} 
Let $U$ be a $\pl$-trap in $\game$. Then 
a $\op$-dominion in $\game[U]$ is a $\op$-dominion in $\game$.\label{sublem:winclosed}
\item \cite[Lemma~4.1]{JurdzinskiPZ08} The set $W_\pl(\game)$ is a $\pl$-dominion.\label{sublem:winsetclosed}
\item \cite[Lemma~4.5]{JurdzinskiPZ08} 
Let $U$ be a subset of the winning set $W_\pl(\game)$ of player~$\pl$ and 
let $A$ be its $\pl$-attractor $\at{\pl}{\game}{U}$. Then the winning set 
$W_\pl(\game)$ of the player~$\pl$ is the union of $A$ and the winning set 
$W_\pl(\game[V \setminus A])$, 
and the winning set $W_\op(\game)$ of the opponent~$\op$ is equal to 
$W_\op(\game[V \setminus A])$.
\label{sublem:subgraph}
\end{enumerate}
\end{lemma}

\subsection{Algorithm}
In this subsection we present our new algorithm to compute the winning sets of 
player~$\pe$ and player~$\po$ in a parity game $\pgame = (\game, \prio)$ 
with $\numprio \le \sqrt{n}$ priorities in time $O(n^{1+\gamma(\numprio+1)})$,
where $\gamma(\numprio) = \numprio/3 + 1/2 - 
4/(\numprio^2 - 1)$ for odd $\numprio$, and $\gamma(\numprio) = 
\numprio/3 + 1/2 - 1/(3 \numprio) - 4/\numprio^2$ for even $\numprio$. The slightly 
simpler special case for $\numprio = 3$ is described explicitly 
in the conference version~\cite{ChatterjeeHL15}.

\smallskip\noindent\emph{High level idea and relation to existing algorithms.}
The classical algorithm
for parity games \cite{McNaughton93,Zielonka98} identifies in each iteration 
a player-$\op$ dominion by a recursive call to a parity 
game with one priority less, then removes its $\op$-attractor from the game graph
and recurses on the remaining game graph until no $\op$-dominion can be found any more;
the remaining vertices form the winning set of player~$\pl$. The 
sub-exponential time algorithm of~\cite{JurdzinskiPZ08} limits the number of 
recursive calls by computing $\op$-dominions
with up to $\sqrt{n}$ vertices in a brute-force manner before each recursive call.
In the Big-Step algorithm \cite{Schewe07,Schewe17} this search for dominions before the recursive calls
is made more efficient by adapting the small progress measure algorithm for 
parity games~\cite{Jurdzinski00} to computing dominions of a bounded size.
Our algorithm refines the Big-Step algorithm of \cite{Schewe07,Schewe17} 
for dense graphs;
the main new insight is that to find dominions of size at most $\domsize$,
it is sufficient to consider a specific subgame with $O(n \cdot \domsize)$ edges.
For the special case of parity games with 2 priorities, i.e., 
Büchi games, our algorithm is equivalent to the algorithm of \cite{ChatterjeeH14}.
We next describe our algorithm and then prove its correctness and running time.
\begin{procedure}
\caption{Parity($\game$, $\prio$, $\domsize$)}
\label{proc:parity}
\SetKwInOut{Input}{Input}
\SetKwInOut{Output}{Output}
\SetKw{and}{and}
\BlankLine
\Input{\emph{game graph} $\game = ((V, E),(\ve, \vo))$ with $n = \lvert V \rvert$,\\
\emph{priority function} $\prio: V \rightarrow [\numprio]$ with $\numprio \ge 1$, and\\
\emph{parameter} $\domsize \in [1, n] \cap \mathbb{N}$}
\Output{winning sets $(\we, \wo)$ of player~$\pe$ and player~$\po$}
\BlankLine
\lIf{$\numprio = 1$}{\Return $(V, \emptyset)$}\label{l:startinit}
let $\pl$ be player~$\pe$ if $\numprio$ is odd and player~$\po$ otherwise\;
$W_{\op} \leftarrow \emptyset$\;\label{l:endinit}
\Repeat{$W'_{\op} = \emptyset$}{\label{l:startrepeat}
	$W'_{\op} \leftarrow \ref{proc:dom}(\game, \prio, 
	\domsize, \op)$\;
	\If{$W'_{\op} = \emptyset$ \and $\numprio \ge 3$}{
		$\game' \gets \game \setminus \at{\pl}{\game}{\prio^{-1}(\numprio-1)}$\;
		$(\we', \wo') \leftarrow \ref{proc:parity}(\game', \prio)$\;
	}
	$A \leftarrow \at{\op}{\game}{W'_{\op}}$\label{l:attrout}; 
	$W_{\op} \leftarrow W_{\op} \cup A$\;
	$\game \leftarrow \game \setminus A$\label{l:update}\;
}\label{l:endrepeat}
$W_{\pl} \gets V \setminus W_{\op}$\;
\Return{$(\we, \wo)$}
\end{procedure}

\smallskip\noindent\emph{Initialization \upbr{steps~\ref{l:startinit}--\ref{l:endinit} of Procedure~\ref{proc:parity}}.}
If all vertices of the parity game have priority zero, player~$\pe$ wins
from all vertices and thus the algorithm terminates and 
returns the set of all vertices as the winning set of player~$\pe$ and the empty 
set as the winning set of player~$\po$.
Otherwise we set $\pl$ according to the parity of the highest 
priority~$\numprio-1$ in the parity 
game. The procedure then iteratively determines the winning set of player~$\op$
and in the end identifies $W_\pl$ as the complement of $W_\op$.

\smallskip\noindent\emph{Iterated vertex deletions \upbr{steps~\ref{l:startrepeat}--\ref{l:endrepeat} of Procedure~\ref{proc:parity}}.}
The algorithm repeatedly removes vertices from the game graph $\game$. 
During the algorithm, we denote by $\game$ the remaining game graph after 
vertex deletions. The vertex removal is achieved by identifying parts of the 
winning set of player~$\op$, i.e, $\op$-dominions, and removing their attractors
from the maintained game graph.

\smallskip\noindent\emph{Dominion find and attractor removal.}
The algorithm repeatedly finds dominions of player~$\op$ in parity games~$\pgame'$
where the highest priority is at most $\numprio - 2$.
The parity game~$\pgame'$ is constructed by removing the $\pl$-attractor
of vertices with priority~$\numprio-1$ from $\game$ (this is implicit in 
the Procedure~\ref{proc:dom} and explicit before the recursive call
to Procedure~\ref{proc:parity}; more details follow).
After a $\op$-dominion in the parity game $\pgame'$ is found, its $\op$-attractor 
is removed from $\game$. Then the search for $\op$-dominions is continued on the 
remaining vertices. If all vertices in the parity game~$\pgame'$ are winning for $\pl$, 
i.e., no $\op$-dominion exists in~$\pgame'$, then the procedure terminates.
The winning set of player~$\op$ is the union of the $\op$-attractors of all 
found $\op$-dominions. The remaining vertices are winning for player~$\pl$.
We now describe the steps to find $\op$-dominions.

\begin{figure}
\centering
\begin{tikzpicture}
\matrix[column sep=8mm, row sep=8mm]{
	\node[even] (1) {$0$};
	& \node[odd] (2) {$0$};
	& \node[even,lightgray] (3) {$2$};
	& \node[odd,lightgray] (4) {$1$};
	& \node[odd,lightgray] (5) {$1$};\\
	\node[even] (6) {$1$};
	& \node[odd] (7) {$1$};
	& \node[odd] (8) {$0$};
	& \node[odd,lightgray] (9) {$2$};
	& \node[even,lightgray] (10) {$1$};\\
};
\path (1) edge[arrow] (2)
					edge[arrow] (7)
					edge[arrow] (8)
					edge[arrow, bend left] (6)
			(6) edge[arrow, bend left] (1)
			(2) edge[arrow, bend left] (7)
					edge[arrow,lightgray] (3)
			(7) edge[arrow] (6)
			(8) edge[arrow] (2)
			(9) edge[arrow,lightgray] (8)
					edge[arrow, bend left,lightgray] (3)
			(3) edge[arrow, bend left,lightgray] (9)
					edge[arrow, bend left,lightgray] (4)
					edge[arrow, bend left,lightgray] (5)
			(4) edge[arrow, bend left,lightgray] (3)
			(5) edge[arrow, bend left,lightgray] (10)
			(10) edge[arrow, bend left,lightgray] (5)
					edge[arrow,lightgray] (9);
\end{tikzpicture}
\caption{The resulting Büchi game $(\game', \prio)$ (in black) after removing 
$\at{\pe}{\game}{\prio^{-1}(2)}$ from~$\game$. 
The vertex in the top left corner is an
$\pe$-vertex with out-degree larger than $\sqrt{n}$, i.e., a blue vertex 
of $\game'_\idxg$ for $\idxg \le \log_2(\sqrt{n})$. The two Büchi vertices in the bottom left 
corner are contained in $\at{\pe}{\game'_\idxg}{\blue_\idxg}$ and we have 
$\mathcal{D}_\idxg = \emptyset$.}\label{fig:buchi}
\end{figure}

\begin{procedure}
\caption{Dominion($\game$, $\prio$, $\domsize$, $\op$)}
\label{proc:dom}
\SetKwInOut{Input}{Input}
\SetKwInOut{Output}{Output}
\BlankLine
\Input{\emph{game graph} $\game = ((V, E),(\ve, \vo))$, \\ 
\emph{priority function} $\prio: V \rightarrow[\numprio]$ with $\numprio \ge 2$, \\
\emph{parameter} $\domsize \in [1, n] \cap \mathbb{N}$, and \\
player~$\op$}
\Output{
a $\op$-dominion that contains all $\op$-dominions 
with at most $\domsize$ vertices or possibly the empty set 
if no such $\op$-dominion exists
}
\BlankLine
\For{$\idxg \leftarrow 1$ \KwTo $\lceil\log_2(\domsize)\rceil$}{
	construct $\game_\idxg$\; 
	$\blue_\idxg \leftarrow \set{v \in V_{\op} \mid \OutDeg(G_\idxg, v) = 0} \cup
	\{v \in V_{\pl} \mid \OutDeg(G, v) > 2^\idxg\}$\;
	$\game'_\idxg \gets \game_\idxg \setminus \at{\pl}{\game_\idxg}{\prio^{-1}(\numprio-1) \cup \blue_\idxg}$\;
	\If{$\numprio = 2$}{ 
		$D_\idxg \leftarrow$ the vertex set of $\game'_\idxg$\;e
	}
	\Else{
		$D_\idxg \leftarrow \progress(\game'_\idxg, \prio, 2^\idxg, \op)$\;
	}
	\If{$D_\idxg \ne \emptyset$}{
		\Return{$D_\idxg$}\;
	}
}
\Return{$\emptyset$}
\end{procedure}

\smallskip\noindent\emph{Steps of dominion find.}
For the search for $\op$-dominions in the parity game~$\pgame'$ we use two 
different procedures, Procedure~\ref{proc:dom} and a recursive call to
Procedure~\ref{proc:parity}. 
We first search for ``small'' $\op$-dominions with up to 
$\domsize$ vertices with Procedure~\ref{proc:dom}, 
where $\domsize$ is a parameter that will be set later 
to balance the running times of the two procedures. 
If no $\op$-dominion is found, we can conclude that either 
all $\op$-dominions contain more than $\domsize$ vertices or 
the winning set of $\op$ on the current game graph is empty. In the latter case
the algorithm terminates. The former case occurs at most 
$n / \domsize$ times and in such a case we use the recursive
call to Procedure~\ref{proc:parity} on the parity game~$\pgame'$ with one priority less
to obtain a $\op$-dominion. Below we describe the details of Procedure~\ref{proc:dom}.

\begin{figure}
\centering
\begin{tikzpicture}
\matrix[column sep=8mm, row sep=8mm]{
	\node[even] (1) {$0$};
	& \node[odd] (2) {$0$};
	& \node[even,lightgray] (3) {$2$};
	& \node[odd,lightgray] (4) {$1$};
	& \node[odd,lightgray] (5) {$1$};\\
	\node[even] (6) {$1$};
	& \node[odd] (7) {$1$};
	& \node[odd] (8) {$0$};
	& \node[odd] (9) {$2$};
	& \node[even,lightgray] (10) {$1$};\\
};
\path (1) edge[arrow] (2)
					edge[arrow] (7)
					edge[arrow] (8)
					edge[arrow, bend left] (6)
			(6) edge[arrow, bend left] (1)
			(2) edge[arrow, bend left] (7)
					edge[arrow,lightgray] (3)
			(7) edge[arrow] (6)
			(8) edge[arrow] (2)
			(9) edge[arrow] (8)
					edge[arrow, bend left,lightgray] (3)
			(3) edge[arrow, bend left,lightgray] (9)
					edge[arrow, bend left,lightgray] (4)
					edge[arrow, bend left,lightgray] (5)
			(4) edge[arrow, bend left,lightgray] (3)
			(5) edge[arrow, bend left,lightgray] (10)
			(10) edge[arrow, bend left,lightgray] (5)
					edge[arrow,lightgray] (9);
\end{tikzpicture}
\caption{The winning set of player~$\po$ in the Büchi game 
$(\game', \prio)$ and
its $\po$-attractor in $\game$.}\label{fig:largedom}
\end{figure}

\begin{figure}
\centering
\begin{tikzpicture}
\matrix[column sep=8mm, row sep=8mm]{
	\node[even,lightgray] (1) {$0$};
	& \node[odd,lightgray] (2) {$0$};
	& \node[even] (3) {$2$};
	& \node[odd] (4) {$1$};
	& \node[odd] (5) {$1$};\\
	\node[even,lightgray] (6) {$1$};
	& \node[odd,lightgray] (7) {$1$};
	& \node[odd,lightgray] (8) {$0$};
	& \node[odd,lightgray] (9) {$2$};
	& \node[even] (10) {$1$};\\
};
\path (1) edge[arrow,lightgray] (2)
					edge[arrow,lightgray] (7)
					edge[arrow,lightgray] (8)
					edge[arrow, bend left,lightgray] (6)
			(6) edge[arrow, bend left,lightgray] (1)
			(2) edge[arrow, bend left,lightgray] (7)
					edge[arrow,lightgray] (3)
			(7) edge[arrow,lightgray] (6)
			(8) edge[arrow,lightgray] (2)
			(9) edge[arrow,lightgray] (8)
					edge[arrow, bend left,lightgray] (3)
			(3) edge[arrow, bend left,lightgray] (9)
					edge[arrow, bend left] (4)
					edge[arrow, bend left] (5)
			(4) edge[arrow, bend left] (3)
			(5) edge[arrow, bend left] (10)
			(10) edge[arrow, bend left] (5)
					edge[arrow,lightgray] (9);
\end{tikzpicture}
\caption{The remaining graph. The two vertices on the right are an $\po$-dominion, the remaining vertices form the winning set of player~$\pe$.}\label{fig:remain}
\end{figure}

\begin{example}[Illustration of the Algorithm for Parity-3.]
Figure~\ref{fig:game} shows a parity game with priorities $\set{0, 1, 2}$,
and Figure~\ref{fig:buchi} shows the Büchi game we obtain
when we remove the $\pe$-attractor of the vertices with the highest priority~$2$.
Figure~\ref{fig:largedom} shows the winning set of player~$\po$ in the Büchi game,
which is an $\po$-dominion in the parity game, 
and its $\po$-attractor in the original game graph.
Finally, Figure~\ref{fig:remain} shows the remaining game graph after
the removal of the $\po$-dominion and its attractor.
\end{example}

\smallskip\noindent\emph{Graph decomposition for Procedure~\ref{proc:dom}.}
In Procedure~\ref{proc:dom} we use the following \emph{hierarchical graph decomposition}.
For a game graph $\game = (G, (\ve, \vo))$ with $G = (V, E)$
we denote its decomposition with respect to player~$\op$ by $\set{\game_\idxg}$. 
The decomposition $\set{\game_\idxg}$ consists of $\lceil \log_2 n \rceil$ 
game graphs $\game_\idxg = (G_\idxg, (\ve, \vo))$ with underlying graphs
$G_\idxg = (V, E_\idxg)$. We call $1 \le \idxg \le \lceil \log_2 n \rceil$
the \emph{level} of $\game_\idxg$. For the definition of $E_\idxg$ we consider the 
incoming edges of each vertex in a fixed order: First the edges from vertices of
$V_\op$, then the remaining edges.
The set of edges $E_\idxg$ contains for each vertex $v \in V$ with 
$\OutDeg(G, v) \le 2^\idxg$ all its outgoing edges in $E$ and in addition for each
vertex $v \in V$ its first $2^\idxg$ incoming edges in $E$.
Note that (1)~$E_\idxg \subseteq E_{\idxg+1}$, 
(2)~$\lvert E_\idxg\rvert \le 2^{\idxg+1} n$, and 
(3)~$\game_{\lceil \log_2{n} \rceil} = \game$. 
We color $\pl$-vertices $v$ with $\OutDeg(G, v) > 2^\idxg$ and $\op$-vertices
without outgoing edges in~$\game_\idxg$ \emph{blue} 
in~$\game_\idxg$ and denote the set of blue vertices by $\blue_\idxg$. 
All other vertices are called \emph{white}.

\smallskip\noindent\emph{Procedure~\ref{proc:dom}.}
The Procedure~$\domalg$ searches 
in the game graph $\game_\idxg$, starting at $\idxg = 1$.
As long as no $\op$-dominion is found, the level~$\idxg$ is increased 
one by one up to at most $\idxg = \lceil\log_2(\domsize)\rceil$.
At each level only a subgame of $\game_\idxg$ that only contains white vertices 
is considered.
This subgame is obtained by removing the $\pl$-attractor of the  
blue vertices $\blue_\idxg$ and the vertices with the maximum priority
$\numprio-1$ from $\game_\idxg$.
This (a)~reduces reduce the number of priorities in the parity game by one
and (b)~ensures that $\op$-dominions found in the subgame are also 
$\op$-dominions in $\game$. In the subgame of $\game_\idxg$ for $\numprio \ge 3$
the $\progress$ algorithm (Lemma~\ref{lem:progress}) is used to find 
$\op$-dominions of size at most $O(2^\idxg)$. For $\numprio = 2$, i.e.,
Büchi games, we have that any non-empty set of remaining vertices is a dominion
of player~$\op$.

\smallskip\noindent\emph{Outline correctness.}
The correctness of Procedure~\ref{proc:parity} follows from the correctness
of the Big-Step algorithm as soon as we have established the correctness
of Procedure~\ref{proc:dom}. 
Let $V'_\idxg$ be the vertices of $\game'_\idxg = \game_\idxg \setminus \at{\pl}{\game_\idxg}{\prio^{-1}(\numprio-1) \cup \blue_\idxg}$.
The correctness of Procedure~\ref{proc:dom}
follows from the following arguments:
\begin{enumerate}
	\item the vertex set $V'_\idxg$ is a $\pl$-trap in $\game_\idxg$ and
	thus a $\op$-dominion of $\game'_\idxg$ is
	a $\op$-dominion of $\game_\idxg$;
	\item the vertices of $V'_\idxg$ are white and hence a $\op$-dominion
	of $\game_\idxg$ is a $\op$-dominion of $\game$; and
	\item thus the correctness of Procedure~\ref{proc:dom}
	follows from the correctness of the progress measure algorithm that is 
	used for determining dominions in $\game'_\idxg$.
\end{enumerate}

\smallskip\noindent\emph{Outline running time.}
We first analyze the running time without the recursive calls 
and then show a bound of $O(n^{1+\gamma(\numprio+1)})$
for the running time including the recursive calls by 
induction over~$\numprio$. For the latter we use that we only do 
recursive calls when no $\op$-dominion of size at most $\domsize$ exists 
and we balance the cost of the recursive calls and the other operations
by setting the parameter~$\domsize$ accordingly, similar to \cite{Schewe07}.
The running time without the recursive calls is, for $\numprio \ge 3$, dominated
by the running time of Procedure~\ref{proc:dom}.
For Procedure~\ref{proc:dom} we show that any $\op$-dominion that, including
its $\op$-attractor, contains at most $2^\idxg$ vertices can be identified 
as a $\op$-dominion in $\game_\idxg$. Thus if some $\op$-dominion is found 
in the search for $\op$-dominions at level $\idxg$ but was not found at 
level~$\idxg-1$, then it must contain $\Omega(2^\idxg)$ vertices. Since 
$\game_\idxg$ contains $O(2^\idxg \cdot n)$ edges, we can account for the 
time that was spent proportional to the edges of $\game_\idxg$ by charging it,
with an additional factor of $n$, to the vertices in the $\op$-dominion. This 
is crucial for the improved running time on dense graphs.

\bigskip
In the remaining part of this section we prove the correctness and running time
of Procedure~\ref{proc:parity} (including the calls to Procedure~\ref{proc:dom}).
The following lemma captures the essence of why the hierarchical graph 
decomposition is helpful for graph games. The lemma is a generalization of 
related lemmata for Büchi games in~\cite{ChatterjeeH14}
and for parity-3 games in the conference version of this paper~\cite{ChatterjeeHL15}
and was also used in subsequent work \cite{ChatterjeeDHL16b}.
The first part of the lemma is essential 
for the correctness of the hierarchical graph decomposition technique on 
game graphs: It shows that every $\pl$-trap in $\game_\idxg$ that contains
only white vertices is also a $\pl$-trap in $\game$.
The second part is essential for the running time: Every $\pl$-trap
in $\game$ that, including its $\op$-attractor, contains at most $\domsize$
vertices, is a $\pl$-trap induced by white vertices in 
$\game_{\lceil \log_2 \domsize \rceil}$. Note that the $\op$-attractor of a
$\pl$-trap is itself a $\pl$-trap and thus this holds in particular
for maximal $\pl$-traps of size at most $\domsize$.

\begin{lemma}
\label{lem:p:decomp}
	Let $\game = ((V, E), (\ve, \vo))$ be a game graph and $\set{\game_\idxg}$ its 
	hierarchical graph decomposition w.r.t.\ to player~$\op$. 
	For $1 \le i \le \lceil \log_2 n \rceil$ let 
	$\blue_\idxg$ be the set consisting of the player-$\op$ vertices that have no outgoing edge
	in $\game_\idxg$ and the player-$\pl$ vertices with more than $2^\idxg$ outgoing edges in~$\game$. 
	\begin{enumerate}
		\item If a set $S \subseteq V \setminus \blue_\idxg$ 
		is a $\pl$-trap in $\game_\idxg$, then $S$ is a $\pl$-trap in~$\game$.
		\label{sublem:p:sound}
		\item If a set $S \subseteq V$ is 
		a $\pl$-trap in $\game$ and $\lvert \at{\op}{\game}{S} \rvert \le 2^\idxg$,
		then \upbr{i} $S$ is a $\pl$-trap in~$\game_\idxg$, \upbr{ii} the set $S$ is in $V \setminus \blue_\idxg$, and
	  \upbr{iii} $\game_\idxg[S] = \game[S]$.
		\label{sublem:p:size}
	\end{enumerate}
\end{lemma}
\begin{proof}\hfill
	\begin{enumerate}
		\item By $S \subseteq V \setminus \blue_\idxg$ we have for all $v \in S \cap V_\pl$
		that $\Out(G, v) = \Out(G_\idxg, v)$.
		Thus if $\Out(\game_\idxg, v) \subseteq S$, then also $\Out(G, v) \subseteq S$.
		Each edge of $G_\idxg$ is contained in $G$, thus we have 
		for all $v \in S \cap V_\op$ that $\Out(G_\idxg, v) \cap S \ne \emptyset$ 
		implies $\Out(G, v) \cap S \ne \emptyset$.
		
		\item By the definition of a $\op$-attractor, all $u \in V_\op$ for which
		there exists a $w \in S$ such that $(u, w) \in E$ are contained in $\at{\op}
		{\game}{S}$. Thus by $\lvert \at{\op}{\game}{S} \rvert \le 2^\idxg$
		all vertices of $S$ have less than $2^\idxg$ incoming edges from vertices 
		of $\op$. Hence by the ordering of the incoming edges in the construction
		of $\game_\idxg$ 
		\begin{enumerate}
			\item[(a)] all edges $(u, w)$ with $u \in V_\op$ and $w \in S$
			are contained in $E_\idxg$ and thus in particular
			\item[(b)] $(u, w) \in E_\idxg$ for all $(u, w) \in E$ 
			with $u \in S \cap V_\op$ and $w \in S$.
		\end{enumerate}
		Note furthermore that since $S$ is a $\pl$-trap in $\game$, there 
		exists a $w \in S$ such that $(u, w) \in E$ for all $u \in S \cap V_\op$;
		by (a) we have $(u, w) \in E_i$ for this $w$.
		Together with $E_\idxg \subseteq E$ it follows that
		(i) $S$ is a $\pl$-trap in $\game_\idxg$.
		The existence of such an edge $(u, w)$ 
		also ensures that every vertex in $V_\op \cap S$ has an outgoing
		edge in $\game_\idxg$, i.e.,
		\begin{enumerate}
			\item[(c)] $\blue_\idxg \cap V_\op \cap S = \emptyset$. 
		\end{enumerate}
		
		Since $S$ is a trap for player~$\pl$ in $\game$ 
		and $\lvert S \rvert \le 2^\idxg$,
		we have for all $v \in V_\pl$ that $\OutDeg(G, v) < 2^\idxg$.
		Thus
		\begin{enumerate}
			\item[(d)] $\blue_\idxg \cap V_\pl \cap S = \emptyset$ and
			\item[(e)] $(u, w) \in E_\idxg$ for all $(u, w) \in E$ 
			with $u \in S \cap V_\pl$ and $w \in S$.
		\end{enumerate}
		Combining (c) and (d) yields (ii) $S \cap \blue_\idxg = \emptyset$,
		and (b) and (e) give (iii) $\game_\idxg[S] = \game[S]$.\qedhere
	\end{enumerate}
\end{proof}

\noindent In the next lemma we consider not just $\pl$-traps but $\op$-dominions 
and thus argue additionally about winning strategies of player~$\op$. The first 
part of the lemma shows the soundness of Procedure~\ref{proc:dom}, while
the second part shows completeness and is crucial for the running time analysis
of the overall algorithm.

\begin{lemma}\label{lem:p:domdecomp}
		Let the parity game $(\game,\prio)$ with $\numprio$ priorities, 
		the parameter $\domsize$, and the player~$\op$ be the input to
	Procedure~\ref{proc:dom}. 
	Let $D_\idxg$, $\game_\idxg'$ and $\blue_\idxg$ be as in Procedure~\ref{proc:dom}.
	\begin{enumerate}
		\item Every $D_\idxg \ne \emptyset$ is a $\op$-dominion in the parity 
		game $(\game,\prio)$
		and the priority of the vertices of $D_\idxg$ is at most $\numprio - 2$.
		\label{sublem:p:domsound}
		\item If there exists a $\op$-dominion $D$ with $\lvert 
		\at{\op}{\game}{D} \rvert \le 2^\idxg$ in $(\game, \prio)$ such that 
		the highest priority in $D$ is at most $\numprio - 2$,
		then $D$ is a $\op$-dominion in the parity game $(\game_\idxg', \prio)$.
		\label{sublem:p:domsize}
	\end{enumerate}
\end{lemma}

\begin{proof}\hfill
	\begin{enumerate}
		\item By definition the highest priority in $\game_\idxg'$ and thus in 
		$D_\idxg$ is at most $\numprio - 2$.
		Let $V_\idxg'$ be the vertices of $\game_\idxg'$. By 
		Lemma~\ref{lem:attr}~\eqref{sublem:complattr} the set $V_\idxg'$ is a
		$\pl$-trap in $\game_\idxg$ and by $V_\idxg' \cap \blue_\idxg = \emptyset$ and
		Lemma~\ref{lem:p:decomp}~\eqref{sublem:p:sound} also in $\game$.
		If $\numprio = 2$, we have $V_\idxg' = D_\idxg$ and 
		since $D_\idxg$ only contains vertices with priority~$0$,
		the set $D_\idxg$ is a $\op$-dominion in $(\game, \prio)$. 
		Suppose now $\numprio \ge 3$. Then $D_\idxg$ is the set returned by 
		$\progress(\game'_\idxg, \prio, 2^\idxg, \op)$ and we have by
		Lemma~\ref{lem:progress} that $D_\idxg$ is a $\op$-dominion
		in the parity game $(\game_\idxg', \prio)$. 
		Since $V_\idxg'$ is a $\pl$-trap in $\game_\idxg$,
		the set $D_\idxg$ is also a $\op$-dominion in $\game_\idxg$ by 
		Lemma~\ref{lem:doms}~\eqref{sublem:winclosed}. 
		Since $D_\idxg$ contains only white vertices,
		we have 
		\begin{enumerate}[label=\({\alph*}]
			\item the set $D_\idxg$ is a $\pl$-trap in $\game$ by 
		Lemma~\ref{lem:p:decomp}~\eqref{sublem:p:sound} and
			\item  all outgoing
		edges of vertices of $V_{\pl} \cap D_\idxg$ are present in $\game_\idxg$. 
		\end{enumerate}
		Thus by $E_\idxg \subseteq E$
		the winning strategy of player~$\op$ for the vertices of $D_\idxg$ in 
		$\game_\idxg$ is also a winning strategy in $\game$ and hence $D_\idxg$ 
		is a $\op$-dominion in the parity game $(\game, \prio)$.
		
		\item By Lemma~\ref{lem:p:decomp}~\eqref{sublem:p:size} we have \upbr{i} 
		$D$ is a $\pl$-trap in $\game_\idxg$, \upbr{ii} $D \cap \blue_\idxg = \emptyset$, and \upbr{iii} $\game[D] = \game_\idxg[D]$.
		Thus
		\begin{enumerate}[label=\({\alph*}]
			\item $D$ is contained in~$\game_\idxg'$ and 
			\item player~$\op$ can play the same winning strategy 
		in~$\game_\idxg[D]$ as in~$\game[D]$.\qedhere
		\end{enumerate}
	\end{enumerate}
\end{proof}

\noindent In the following corollary we state the insights of the previous two lemmata 
as needed for the running time analysis. The first part shows that when we use
the hierarchical graph decomposition with increasing level~$\idxg$
to search for a $\op$-dominion and we have to go up to level~$\idxg^*$, then
the found $\op$-dominion, or at least its $\op$-attractor (which is again a $\op$-dominion),
contains a number of vertices proportional to $2^{i^*}$, which allows us to charge 
the work done in the search to the vertices in the identified dominion. The second
part of the corollary states that no ``small'' $\op$-dominions
exist in the maintained parity game
if Procedure~\ref{proc:dom} returns the empty set, where ``small'' is 
specified by the parameter~$\domsize$ that will be set to balance
the running time of Procedure~\ref{proc:dom} and the recursive calls. 
In this case either no $\op$-dominion exists in the parity game and the 
algorithm terminates or the subsequent recursive call identifies a $\op$-dominion
with more than~$\domsize$ vertices; the latter can happen at most
$O(n / \domsize)$ times and allows us to bound the number of 
iterations in Procedure~\ref{proc:parity}.

Recall that we set $\pl$ to $\pe$ if the highest priority in the parity game is even
and to odd otherwise, i.e., when we search for $\op$-dominions, we search for
dominions of the player that tries to avoid the highest priority.
For the proof of the second part we use that in this case every 
$\op$-dominion~$D$ contains a subset $D'$ that is a $\op$-dominion
itself and does not contain any vertex with the highest priority $\numprio - 1$
(Proposition~2 of~\cite{Jurdzinski00}).
Intuitively, $D'$ is the set of vertices that are contained in
the cycles of $D$ that are induced by the memoryless winning strategy of 
player~$\op$ in $\game[D]$.

\begin{corollary}\label{cor:minsize}
		Let the parity game $(\game,\prio)$ with $\numprio$ priorities, 
		the parameter $\domsize$, and the player~$\op$ be the input to
	Procedure~\ref{proc:dom}. 
	Let $D_\idxg$, $\game_\idxg'$ and $\blue_\idxg$ be as in Procedure~\ref{proc:dom}.
	\begin{enumerate}
		\item If for some $\idxg > 1$ we have $D_\idxg \ne \emptyset$ but $D_{\idxg-1} = \emptyset$,
		then $\lvert \at{\op}{\game}{D_\idxg} \rvert > 2^{\idxg-1}$.
		\item If Procedure~\ref{proc:dom} returns the empty set, then we have 
		for every $\op$-dominion~$D$ in the given parity game $\lvert 
		\at{\op}{\game}{D} \rvert > \domsize$.
	\end{enumerate}
\end{corollary}

\begin{proof}\noindent
	\begin{enumerate}
		\item By Lemma~\ref{lem:p:domdecomp}~\eqref{sublem:p:domsound} we have
		that $D_\idxg$ is a $\op$-dominion in $(\game, \prio)$ and the vertices
		of $D_\idxg$ have a priority of at most $\numprio - 2$.
		Assume by contradiction that $\lvert \at{\op}{\game}{D_\idxg} \rvert \le 2^{\idxg-1}$.
		Then by Lemma~\ref{lem:p:domdecomp}~\eqref{sublem:p:domsize}
		the set $D_{\idxg-1}$ contains $D_\idxg$ and thus 
		is not empty, a contradiction.
	\item Assume by contradiction that Procedure~\ref{proc:dom} returns the empty
	set and there exists a $\op$-dominion~$D$ with
	$\lvert \at{\op}{\game}{D} \rvert \le \domsize$ in 
	$(\game, \prio)$. By Proposition~2 of~\cite{Jurdzinski00} 
	in this case also a $\op$-dominion $D' \subseteq D$ exists 
	such that the vertices of $D'$ have priority at most $\numprio - 2$.
  By Lemma~\ref{lem:p:domdecomp}~\eqref{sublem:p:domsize}
	the set $D'$ is a $\op$-dominion in the parity game $(\game_\idxg', \prio)$
	for $\idxg \ge \lceil\log_2(\lvert D' \rvert) \rceil$ and thus in particular for 
	$\idxg = \lceil\log_2(\domsize)\rceil$. Hence
	by Lemma~\ref{lem:progress} Procedure~\ref{proc:dom} returns 
	a non-empty set, a contradiction.
	\qedhere
	\end{enumerate}
\end{proof}

\noindent In the following lemma we use Lemma~\ref{lem:progress} that bounds the running time 
of the calls to the progress measure algorithm together with 
the relation between levels in the hierarchical
graph decomposition and the size of $\op$-dominions, provided by the corollary
above, to bound the running time of Procedure~\ref{proc:parity} without recursive calls.

\begin{lemma}\label{lem:timedom}
Let $(\game, \prio)$ be a parity game with a game graph $\game = ((V, E),(\ve, \vo))$ 
with $n = \lvert V \rvert$, a priority function $\prio$, and $\numprio$ priorities
and let $\domsize \in [1, n]$ be a parameter that for $\numprio = 2$ is equal 
to $\domsize = n$.
The running time of Procedure~\ref{proc:parity}\upbr{$\game$, $\prio$} \emph{without the 
recursive calls}, and without the attractor computation before the recursive calls,
is $O\big(\numprio  \cdot n^2 \cdot 
\binom{\domsize + \lfloor \numprio/2 \rfloor}{\domsize}\big)$
for $\numprio \ge 3$, $O(n^2)$ for 
$\numprio = 2$, and $O(n)$ for $\numprio = 1$.
\end{lemma}
\begin{proof}
	All operations before and after the repeat-until loop can be done in $O(n)$ time,
	which shows $O(n)$ for $\numprio = 1$.
	Further the attractor computations and the updates of the maintained sets in 
	lines~\ref{l:attrout}--\ref{l:update} can be done in total time $O(m)$.
	Thus it remains to bound the total time for the calls to Procedure~\ref{proc:dom}.
	
	To efficiently construct the graphs $\game_\idxg$ and the vertex sets $\blue_\idxg$, we 
	maintain ordered lists of the incoming and outgoing edges of each vertex. These lists
	can be updated whenever an obsolete entry is encountered in the construction
	of $\game_\idxg$; as each entry is removed at most once, this takes total time $O(m)$.
	
	We now analyze the time spent in an iteration~$\idxg$ of the for-loop in 
	Procedure~\ref{proc:dom}. The graph $\game_\idxg$ contains $O(2^\idxg \cdot n)$
	edges and both $\game_\idxg$ and $\blue_\idxg$ can be constructed from the maintained 
	lists of in- and outedges in $O(2^\idxg \cdot n)$ time. Also the attractor 
	computation takes time $O(2^\idxg \cdot n)$. Thus for $\numprio = 2$ the time
	in iteration~$\idxg$ is $O(2^\idxg \cdot n)$, while for $\numprio \ge 3$ the time
	is dominated by the call to $\progress$. Note that with the attractor computation 
	the vertices with the highest priority are removed from the parity game, thus 
	the call to $\progress$ is done for a parity game with $\numprio - 1$ priorities
	and parameter $\domsize = 2^\idxg$. Hence by Lemma~\ref{lem:progress} 
	iteration~$\idxg$ for $\numprio \ge 3$ takes time 
	\begin{equation*}O\left(
	\numprio \cdot n \cdot 2^\idxg \cdot
	\binom{2^\idxg + \lceil (\numprio - 1)/2 \rceil}{2^\idxg}
	\right) = O\left(
	\numprio \cdot n \cdot 2^\idxg \cdot
	\binom{2^\idxg + \lfloor \numprio/2 \rfloor}{2^\idxg}
	\right)\,.\end{equation*}
	The time for all iterations up to the $\idxg$-th iteration forms a geometric series 
	and thus satisfies the same running time bound. 
	
	Let $\idxg^*$ be the last iteration of the for-loop in a call to 
	Procedure~\ref{proc:dom}. Let $\pl$ be $\pe$ if $\numprio$ is odd and 
	$\po$ otherwise. By Corollary~\ref{cor:minsize} either 
	\begin{enumerate}
		\item  $D_{i^*}$ is a 
	$\op$-dominion with $\lvert \at{\op}{\game}{D_{i^*}} \rvert > 2^{i^* - 1}$ 
	vertices or
		\item $\idxg^* = \lceil \log_2(\domsize )\rceil$ and
	$\game$ does not contain any $\op$-dominion~$D$ with $\at{\op}{\game}{D} 
	 \le \domsize$ vertices.	
	\end{enumerate} 
	In case~\upbr{1} more than $2^{i^* - 1}$ 
	vertices are removed from the maintained graph in this iteration. We charge
	each of these vertices 
	$O\big(
	\numprio \cdot n \cdot
	\binom{2^\idxg + \lfloor \numprio/2 \rfloor}{2^\idxg}
	\big)$
	time, which can be bounded by 
	$O\big(
	\numprio \cdot n \cdot
	\binom{\domsize + \lfloor \numprio/2 \rfloor}{\domsize}
	\big)$
	(per vertex). Hence the total running time is bounded by
	\begin{equation*}O\left(
	\numprio \cdot n^2 \cdot 
	\binom{\domsize + \lfloor \numprio/2 \rfloor}{\domsize}
	\right)\,.
	\end{equation*}
	In case~\upbr{2} either 
	\begin{enumerate}[label=\({\alph*}]
		\item $\numprio \ge 3$ and a $\op$-dominion
	with more than $\domsize$ vertices in its $\op$-attractor is detected
	in the subsequent recursive call to Procedure~\ref{proc:parity} or
		\item there is no $\op$-dominion in the maintained parity game
	and this is the last iteration of the repeat-until loop in the 
	Procedure~\ref{proc:parity}.
	\end{enumerate}
	In Case~\upbr{a} we have again that more than $2^{i^* - 1}$ 
	vertices are removed from the maintained graph in this iteration and thus
	can apply the same argument as for Case~\upbr{1}.
	Case~\upbr{b} can happen at most once and its running time is bounded by $O(n^2)$
	for $\numprio = 2$ and by 
	\begin{equation*}O\left(
	\numprio \cdot n \cdot 2^{\log_2(\domsize)} \cdot 
	\binom{2^{\log_2(\domsize)} + \lfloor \numprio/2 \rfloor}{2^{\log_2(\domsize)}}
	\right)\end{equation*}
	 for $\numprio \ge 3$,
	which can be bounded by
	$O\big(
	\numprio \cdot n^2 \cdot 
	\binom{\domsize  + \lfloor \numprio/2 \rfloor}{\domsize}
	\big)$.
\end{proof}

To bound the running time including the recursive calls, we use a similar analysis
and similar parameters as 
for the Big-Step algorithm in~\cite{Schewe07,Schewe17}.
Let $\gamma(\numprio) = \numprio/3 + 1/2 - 
4/(\numprio^2 - 1)$ for odd $\numprio$ and $\gamma(\numprio) = 
\numprio/3 + 1/2 - 1/(3 \numprio) - 4/\numprio^2$ for even $\numprio$.
Further let $\beta(\numprio) = \gamma(\numprio)/(\lfloor\numprio / 2 \rfloor + 1)$.
It can easily be verified that
$\gamma(\numprio+1) = 1 + \gamma(\numprio)- \beta(\numprio)$ holds.
We set $\domsize = n^{\beta(\numprio)}$ for $\numprio \ge 3$ and
additionally $\domsize = n$ for $\numprio = 2$. We show by induction over $\numprio$
a running time bound of $O(n^{1+\gamma(\numprio+1)})
= O(n^{2 + \gamma(\numprio) - \beta(\numprio)})$ for parity games with 
$\numprio$~priorities. The running time of the Big-Step algorithm for parity 
games with $\numprio$ priorities is
$O(m \cdot (6 e^{5/3} n / c^2)^{\gamma(\numprio)})$, 
i.e., for a constant number of priorities we replace $m$ by $n^{2-\beta(\numprio)}$.
\begin{lemma}[Running time]\label{lem:time}
	For parity games with $\numprio \le \sqrt{n}$ priorities
	Procedure~\ref{proc:parity} takes time 
	$O(n^{1+\gamma(\numprio+1)})$, where 
	$\gamma(\numprio) = \numprio/3 + 1/2 - 4/(\numprio^2 - 1)$ for odd~$\numprio$ 
	and $\gamma(\numprio) = \numprio/3 + 1/2 - 1/(3 \numprio) - 4/\numprio^2$ for 
	even~$\numprio$.
\end{lemma}
\begin{proof}
For the base case of $\numprio = 2$ we have $\gamma(\numprio + 1) = 1$ and no recursive 
calls. Thus the running time of Procedure~\ref{proc:parity} for $\numprio = 2$ 
is $O(n^2)$ by Lemma~\ref{lem:timedom} (in this case Procedure~\ref{proc:parity}
is equivalent to the algorithm of~\cite{ChatterjeeH14}). Suppose Procedure~\ref{proc:parity}
runs in time $O(n^{1+\gamma(\numprio)})$ for a parity game with 
$\numprio - 1 \ge 2$ priorities. We show that this implies that 
Procedure~\ref{proc:parity} runs in time $O(n^{1+\gamma(\numprio+1)})$
for a parity game with $\numprio$ priorities for $3 \le \numprio \le \sqrt{n}$. Let 
$\domsize = n^{\beta(\numprio)}$ for $\beta(\numprio) 
= \gamma(\numprio)/(\lfloor\numprio / 2 \rfloor + 1)$.
We have $\beta(\numprio) \ge 1/2$ for all $\numprio \ge 3$ and 
thus $\domsize \ge \sqrt{n}$. By Lemma~\ref{lem:timedom}
the time spent in Procedure~\ref{proc:parity}
without the recursive calls is $O\big(\numprio  \cdot n^2 \cdot 
\binom{\domsize + \lfloor \numprio/2 \rfloor}{\domsize}\big)$. With 
Stirling's approximation of $(x/e)^x \le x!$ we have
\begin{align*}
\binom{\domsize + \lfloor \numprio/2 \rfloor}{\domsize} 
&\le  \frac{\left(\domsize + \lfloor \numprio/2 \rfloor \right)^{\lfloor \numprio/2 \rfloor}}{\lfloor \numprio/2 \rfloor !}\,,\\ 
&\le \left(\frac{\left(\domsize + \lfloor \numprio/2 \rfloor \right) \cdot e}{\lfloor \numprio/2 \rfloor}\right)^{\lfloor \numprio/2 \rfloor}\,.
\end{align*}
Using $3 \le \numprio \le \sqrt{n} \le \domsize$, we obtain
\begin{align*}
 \left(\frac{\left(\domsize + \lfloor \numprio/2 \rfloor \right) \cdot e}{\lfloor \numprio/2 \rfloor}\right)^{\lfloor \numprio/2 \rfloor}
&\le \left(\frac{2 e \domsize + e \numprio}{\numprio - 1}\right)^{\lfloor \numprio/2 \rfloor}\,,\\
&\le \left(\frac{5 e \domsize}{\numprio}\right)^{\lfloor \numprio/2 \rfloor}\,.\\
\end{align*}
Thus we have
\begin{align*}
\numprio \cdot \binom{\domsize + \lfloor \numprio/2 \rfloor}{\domsize}
&\le  \frac{(5 e)^{\lfloor \numprio/2 \rfloor}}{\numprio^{\lfloor \numprio/2 \rfloor - 1}} \domsize^{\lfloor \numprio/2 \rfloor}\,,\\
\end{align*}
which is $\le \domsize^{\lfloor \numprio/2 \rfloor}$ 
for $\numprio \ge (5 e)^{3/2}$ and 
 $\le \kappa \domsize^{\lfloor \numprio/2 \rfloor}$ for some constant $\kappa$
for $\numprio < (5 e)^{3/2}$.
Hence the time without the recursive calls is bounded by $O(n^2 \cdot 
\domsize^{\lfloor \numprio/2 \rfloor})$, which is equal to $
O(n^{2 + \beta(\numprio)\lfloor \numprio/2 \rfloor})$ for $\domsize = n^{\beta(\numprio)}$.
By the choice of $\beta(\numprio)$ we have $\beta(\numprio)\lfloor \numprio/2 \rfloor
= \gamma(\numprio) - \beta(\numprio)$ and thus we can write this bound
as $O(n^{2 + \gamma(\numprio) - \beta(\numprio)})$.
By Corollary~\ref{cor:minsize} there are at most $O(n / \domsize)
= O(n^{1 - \beta(\numprio)})$ recursive calls to Procedure~\ref{proc:parity}. 
Each recursive call is for a parity game 
with one priority less and thus takes time $O(n^{1+\gamma(\numprio)})$.
Hence the total time for all recursive calls is bounded by 
$O(n^{1 - \beta(\numprio) + 1 + \gamma(\numprio)})$. 
For $\gamma(\numprio)$ as defined above we have $\gamma(\numprio+1) = 1 +
\gamma(\numprio)- \beta(\numprio)$, which completes the proof. 
\end{proof}

The correctness proof for Procedure~\ref{proc:parity} follows mostly the 
correctness proof of the classical algorithm for parity games and is folklore; 
additionally Lemma~\ref{lem:p:domdecomp}~\eqref{sublem:p:domsound} shows the 
correctness of Procedure~\ref{proc:dom}.

\begin{lemma}[Correctness]\label{lem:correctness}
Given a parity game $\pgame = (\game,\prio)$, let $\pl$ be player~$\pe$ if
	$\numprio$ is odd and player~$\po$ otherwise and let 
	$W_\pl$ and $W_\op$ be the output of Procedure~\ref{proc:parity}.
We have: \upbr{1}~\emph{(Soundness).} $W_\op \subseteq W_\op(P)$; and 
\upbr{2}~\emph{(Completeness).} $W_\op(P) \subseteq W_\op$.
\end{lemma}
\begin{proof}
The proof is by induction over the number of priorities~$\numprio$.
The induction base is $\numprio = 1$, where the algorithm correctly returns
$\we = V$ and $\wo = \emptyset$. Suppose the algorithm is correct for parity
games with $\numprio - 1$ priorities. We show the correctness for
parity games with $\numprio$ priorities by proving \upbr{1} 
by induction on the iterations 
of the repeat-until loop that all vertices of $W_\op$ are indeed winning for 
player~$\op$ (soundness) and then \upbr{2} 
construct a winning strategy for player~$\pl$ on the 
remaining vertices $W_\pl = V \setminus W_\op$ (completeness).

Soundness will follow from showing that whenever the set $W'_\op$ determined
in an iteration of the repeat-until loop is not empty, then it is a $\op$-dominion.
This is sufficient because by 
Lemma~\ref{lem:doms}~\eqref{sublem:subgraph}
it is valid to determine a $\op$-dominion, remove its $\op$-attractor,
and recurse on the remaining game graph; and hence  $W'_\op$  being a $\op$-dominion
implies soundness by induction over the repeat-until loop.
If $W'_\op \ne \emptyset$ is returned by Procedure~\ref{proc:dom},
then it is a $\op$-dominion by Lemma~\ref{lem:p:domdecomp}~\eqref{sublem:p:domsound}.
If $W'_\op \ne \emptyset$ is returned by the recursive call to 
Procedure~\ref{proc:parity} for the game graph $\game' = \game \setminus 
\at{\pl}{\game}{\prio^{-1}(\numprio-1)}$, then $W'_\op$ is a $\op$-dominion
by the following argument: By the induction assumption the set $W'_\op$ is 
a $\op$-dominion in $\game'$. By Lemma~\ref{lem:attr}~\eqref{sublem:complattr}
the vertices of $\game'$ form a $\pl$-trap in $\game$ and by 
Lemma~\ref{lem:doms}~\eqref{sublem:winclosed} a $\op$-dominion in a subgame 
induced by a $\pl$-trap is a $\op$-dominion in the full game.

We now prove the completeness result.
When Procedure~\ref{proc:parity} terminates, the winning set~$W'_\op$ of player~$\op$ 
in the parity game $(\game',\prio)$ is empty. 
Also note that since the algorithm removes attractors of $\op$, the set $W_\pl$ 
is a trap for $\op$ by Lemma~\ref{lem:attr}~(\ref{sublem:complattr}).
Consider the set $Z = \set{v \in W_\pl \mid \prio(v) = \numprio-1}$, its 
attractor $X =\at{\pl}{\game}{Z}$, and the subgame induced by 
$U = W_\pl \setminus X$. 
Note that the game graphs $\game[U]$ and $\game'[U]$ coincide. 
Thus all vertices of $U$ must be winning for player~$\pl$ in the parity game 
$(\game',\prio)$ as otherwise $W'_\op$ would have been non-empty for 
$(\game',\prio)$.
We prove the lemma by describing a winning strategy for player~$\pl$ in $\pgame$ 
for all vertices in $W_\pl$. For vertices of $Z \cap V_{\pl}$ the winning 
strategy chooses an edge in $W_\pl$, which exists since $W_\pl$ is a $\op$-trap.
For vertices in $X\setminus Z$ player~$\pl$ follows her attractor strategy to $Z$. 
In the subgame induced by $U = W_\pl \setminus X$ player~$\pl$ follows her
winning strategy in the parity game $(\game',\prio)$. Then in a play 
either (i)~$X$ is visited infinitely often; 
or (ii)~from some point on only vertices of $U$ are visited.
In the former case, the attractor strategy ensures that then some vertex of $Z$ 
with priority~$\numprio-1$ is visited infinitely often; and in the later case, 
the subgame winning strategy ensures that the highest priority visited 
infinitely often has the same parity as $\numprio-1$.
It follows that $W_\pl \subseteq W_\pl(P)$, i.e., $W_\op(P) \subseteq W_\op$, 
and the desired result follows.
\end{proof}

Lemmata~\ref{lem:time} and~\ref{lem:correctness} yield the following result.

\begin{theorem}\label{th:paritydense}
Procedure~\ref{proc:parity} correctly computes the winning sets in parity games
with $n$ vertices and
$\numprio \le \sqrt{n}$~priorities in $O(n^{1+\gamma(\numprio+1)})$ time,
where $\gamma(\numprio) = \numprio/3 + 1/2 - 4/(\numprio^2 - 1)$ for odd~$\numprio$ 
	and $\gamma(\numprio) = \numprio/3 + 1/2 - 1/(3 \numprio) - 4/\numprio^2$ for 
	even~$\numprio$.
\end{theorem}

\smallskip\noindent\emph{Computation of winning strategies.} 
In parity-3 games the previous results for computing winning strategies for the 
players in their respective winning sets are as follows:
The small-progress measure algorithm of~\cite{Jurdzinski00} requires $O(m n)$~time 
to compute the winning strategy of player~$\pe$ and $O(m n^2)$~time to 
compute the winning strategy for player~$\po$; Schewe~\cite{Schewe17} 
shows how to modify the small-progress measure algorithm to compute the 
respective winning strategies of both players in 
$O(m n)$~time. Schewe's running time bound for general parity games also 
holds when both winning strategies are requested~\cite{Schewe17}.
We show that our algorithm also computes the respective 
winning strategies without increasing the running time, i.e., in 
$O(n^{5/2})$~time for parity-3 games and in $O(n^{1+\gamma(\numprio+1)})$~time 
for parity games with $\numprio \le \sqrt{n}$. 

For \emph{parity-3} we first observe that for Büchi games~\cite{ChatterjeeH14}
we can construct in $O(n^2)$~time also the respective winning strategies of both 
players since the algorithm is based on identifying traps and attractors, 
and the corresponding winning strategies are identified immediately with the 
computation.
The proof of Lemma~\ref{lem:correctness} describes the strategy computation for 
a winning strategy of player~$\po$ which involves an attractor strategy and the sub-game 
strategy for Büchi games, each of which can be computed in $O(n^2)$ time.
A winning strategy for player~$\pe$ is obtained in the iterations of the algorithm, 
i.e., whenever we obtain a dominion by solving Büchi games, we also obtain a 
corresponding winning strategy, and similarly for the attractor computation.
Thus the winning strategy for player~$\pe$ can be computed in $O(n^{5/2})$~time.
For \emph{general parity games} the winning strategies for both players are constructed
in a similar way; the argument uses parity-3 games as base case and then 
induction over the recursive calls.
Let $\pl$ be player~$\pe$ if $\numprio$ is odd and 
player~$\po$ otherwise. First note that the time bound in 
Lemma~\ref{lem:progress}, and therefore the time bound of Procedure~\ref{proc:dom},
includes the computation of a winning strategy for player~$\op$ within a 
$\op$-dominion determined by Procedure~\ref{proc:dom}.
 The winning strategy of player~$\op$ is a combination
of his winning strategies for the dominions identified in 
Procedure~\ref{proc:dom} and the dominions identified in the recursive 
calls for parity games with one priority less and the corresponding 
attractor strategies. The winning strategy of player~$\pl$, as described in 
Lemma~\ref{lem:correctness}, is identified in the last
iteration of the repeat-until loop and consists of her winning strategy 
for the parity game for which the last recursive call is made and 
her attractor strategy to vertices with priority~$\numprio-1$.

\begin{corollary}
Winning strategies for player~$\pe$ and player~$\po$ in their 
respective winning sets in parity games with $n$ vertices and
$\numprio \le \sqrt{n}$~priorities can be computed in $O(n^{1+\gamma(\numprio+1)})$ time. 
\end{corollary}

\section{Streett Objectives in Graphs}\label{sec:streett}

In this section we present our algorithm for graphs with Streett objectives, 
and an upper and a lower bound for reporting a certificate for a vertex in the 
winning set. The input is a directed graph $G=(V, E)$ and $k$ Streett 
pairs $( L_\idxs, U_\idxs )$, $\idxs = 1, \dotsc, k$.
The size of the input is measured in terms of $m=\lvert E \rvert$, 
$n=\lvert V \rvert$, $k$, and 
$b = \sum_{\idxs=1}^k (\lvert L_\idxs \rvert + \lvert U_\idxs \rvert) \le 2 n k$.
Our algorithm runs in time $O(n^2 + b \log n)$.

\subsection{Preliminaries}\label{sec:pre}

Let $\OutDeg(G, u)$ be the number of outgoing edges of vertex $u$ in the graph~$G$;
we omit $G$ if clear from the context.
Let $G[S]$ denote the subgraph of a graph $G = (V,E)$ induced 
by the set of vertices $S \subseteq V$. 
$\rev$ denotes the graph with vertices $V$ and all edges of $G$ reversed. 
Let $\reachG{G}{S}$ be the 
set of vertices in $G$ that \emph{can reach} a vertex of $S \subseteq V$. 
A strongly connected component (SCC) of a directed graph $G=(V,E)$ is a subgraph 
$G[S]$ induced by a maximal subset of vertices $S \subseteq V$ such that there
is a path in $G[S]$ between every pair of vertices in $S$. We use the abbreviation 
SCS to denote a strongly connected subgraph that is not necessarily an SCC (i.e.,
is not necessarily maximal w.r.t.\ strong connectivity).
We call an SCS (resp.\ SCC) \emph{trivial} if it only contains a single vertex and no edges.
All other SCSs (resp.\ SCCs) are \emph{non-trivial}.
The set $\reachG{G}{S}$ and the SCCs of a graph $G$ can be found in 
linear time by, e.g., depth-first search~\cite{Tarjan72}.

\smallskip\noindent\emph{Algorithm~\streettalg and good component detection.}
Consider an SCC $\scc$; the \emph{good component detection 
problem} asks to (a) output a non-trivial SCS $G[\gc] \subseteq \scc$ 
induced by some set of vertices~$\gc$
such that for all $1 \le \idxs \le k$ either 
no vertex of $L_\idxs$ or at least one vertex of $U_\idxs$ is contained in the SCS 
(i.e., $L_\idxs \cap \gc=\emptyset$ or $U_\idxs \cap \gc\neq \emptyset$),
or (b) detect that no such SCS exists.
In the former case, there exists an infinite path 
that visits $\gc$ infinitely often and satisfies the Streett objective,
while in the later case there exists no infinite path 
that visits vertices of the SCC~$\scc$ infinitely often and satisfies the Streett objective. 
It follows from the results of~\cite{AlurHenzingerBook} that the following algorithm, 
called Algorithm~\streettalg, suffices for 
the winning set computation: 
\begin{enumerate}
	\item Compute the SCC decomposition of the graph;
 \item for each SCC $\scc$ for which the good component detection 
returns an SCS, label the SCC $\scc$ as \emph{satisfying};
  \item output the set of vertices that can reach a satisfying SCC as the winning set.
\end{enumerate}
Since 
the first and last step are linear time, the running time of Algorithm~\streettalg is 
dominated by the detection of good components in SCCs.
In the following we assume that the input graph is strongly connected and focus
on good component detection.

\subsection{Certificate computation} 
In this subsection we present our results for the certificate computation.
Given a start vertex $x$ that belongs to the winning set, 
a \emph{certificate} or \emph{lasso}~\cite{Ehlers10} is a path from $x$
that consists of a simple path and a not necessarily simple cycle such
that the a play that traverses the cycle infinitely often satisfies the objective.
From the certificate an example of an \emph{accepting run}, i.e., an infinite path 
from $x$ that satisfies the objective, can be constructed.
The output of Algorithm~\streettalg can be used to obtain a certificate.
Let $\gc$ be a set of vertices that induces a good component~$G[\gc]$ and let $x$ 
be a start vertex that can reach $\gc$.
We generate a certificate for $x$ being in the winning set 
as follows. A simple path from $x$ to $\gc$ can be found 
in linear time by a depth-first search. Let $v$ be the vertex of $\gc$ where this path ends. 
We call $v$ the \emph{root} of the good component $G[\gc]$. We show next how to 
obtain, in 
$O(m + n  \min(n, k))$ time, from the good component $G[\gc]$ a cycle starting and ending 
at the root $v$ such that the path resulting from the simple path and the 
cycle is indeed a certificate. For this it is sufficient that the cycle in 
$G[\gc]$ contains for each $L_\idxs$ with $L_\idxs \cap \gc \ne \emptyset$ a vertex 
of $U_\idxs \cap \gc$, i.e., we do not have to include \emph{all} vertices of~$\gc$.

We can use Tarjan's depth-first search based SCC
algorithm~\cite{Tarjan72} to traverse the subgraph $G[\gc]$ in linear $O(m)$ time,
starting from root $v$. Tarjan's algorithm constructs a graph called 
\emph{jungle} with $O(\lvert \gc \rvert)$ edges that for the strongly connected (sub)graph 
$G[\gc]$ consists of a spanning tree and at most one \emph{backedge} per 
vertex of $\gc$. The vertices are assigned pre-order numbers in the order they are traversed. 
We say an edge of $G[\gc]$ is a \emph{backedge} if it leads from a vertex with 
a higher number to a vertex with a lower number. Spanning tree edges always lead
from lower numbered vertices to higher numbered vertices. 
In Tarjan's algorithm a \emph{lowlink} is determined for each vertex $u$ which refers
to the lowest numbered vertex $w$ that $u$ can reach by a sequence of tree edges 
followed by at most one backedge. We additionally store at each vertex $u \ne v$
a \emph{backlink} that is the first edge on the path from $u$ to its lowlink.
The backlinks can be determined and stored during the depth-first search
without increasing its running time.

\begin{example}[Illustration of Tarjan's jungle graph.]
Figure~\ref{fig:tarjan} shows the types of edges and the values
at the vertices as assigned by Tarjan's SCC algorithm for a small example graph.
\end{example}

\begin{figure}
\centering
\begin{tikzpicture}[>=stealth',bend angle=20,font=\footnotesize]
\tikzstyle{vertex}=[circle,draw,thick,inner sep=1pt]
\tikzstyle{arrow}=[->,line width=1pt]
\tikzstyle{label}=[draw,circle,fill,inner sep=1.2pt,solid]
\matrix[column sep=8mm, row sep=8mm]{
	\node[vertex] (1) {$1\,[1]$};
	& \node[vertex] (4) {$4\,[2]$};
	& \node[vertex] (3) {$3\,[2]$};\\
	\node[vertex] (5) {$5\,[1]$};
	& \node[vertex] (2) {$2\,[1]$};
	& \node[vertex] (6) {$6\,[3]$};\\
};
\draw[arrow] (1) -- (2);
\draw[arrow] (2) -- (3);
\draw[decorate,decoration={snake,amplitude=.4mm},arrow] (2) --  (5);
\draw[arrow] (2) -- (6);
\draw[decorate,decoration={snake,amplitude=.4mm},arrow] (3) -- (4);
\draw[decorate,decoration={snake,amplitude=.4mm},arrow,dotted] (4) -- (2);
\draw[decorate,decoration={snake,amplitude=.4mm},arrow,dotted] (5) -- (1);
\draw[decorate,decoration={snake,amplitude=.4mm},arrow,dotted] (6) -- (3);
\end{tikzpicture}
\caption{An example for a ``jungle'' constructed by Tarjan's SCC algorithm
for an SCC.
Backedges are dotted, spanning tree edges are solid. Backlinks are curved. 
The numbers of the vertices represent the order in which the vertices are visited, 
the numbers in brackets are the lowlinks.}\label{fig:tarjan}
\end{figure}

With this data structure we can find within $G[\gc]$ a path from root $v$ to a vertex
$u \in \gc$, $u \ne v$, and back by first searching for $u$ in
the spanning tree and then following the backlinks back to $v$. 
Since no vertex appears more than twice on this path, its size and the 
time to compute it is $O(\lvert \gc \rvert)$.
As it suffices to find such paths for one vertex per non-empty set 
$U_\idxs \cap \gc$, we 
can generate a certificate from $G[\gc]$
in $O(m + \lvert \gc \rvert \min(\lvert \gc \rvert, \lvert\{\idxs \mid 
U_\idxs \cap \gc \ne \emptyset \}\rvert))$ time, which can be bounded by
$O(m + n  \min(n, k))$. This certificate has a size of $O(n \min(n, k))$.

\begin{example}[Lower bound.]\label{ex:lowerbound}
Figure~\ref{fig:certificate} shows that the smallest existing certificate 
can be as large as $\Theta(n  \min(n, k))$.
\end{example}

\begin{figure}
\centering
\begin{tikzpicture}[>=stealth',bend angle=20,font=\small, auto]
\tikzstyle{vertex}=[circle,draw,thick]
\tikzstyle{arrow}=[->,line width=1pt]
\matrix[column sep=8mm, row sep=8mm]{
	\node[vertex, inner sep=2pt] (s) {$s$};
	& & &\\
	\node[vertex,inner sep=2pt] (t) {$t$};
	& \node[vertex,inner sep=1pt] (v1) {$v_1$};
	& \node[vertex,inner sep=1pt] (vi) {$v_i$};
	& \node[vertex,inner sep=1pt] (vk) {$v_{k}$};\\
};
\draw[decorate,decoration={snake},->,line width=1pt] 
	(s) -- node[left] {$\Theta(n)$} (t);
\draw[draw=none] (v1) -- node[auto=false,pos=0.5] {$\cdots$} (vi);
\draw[draw=none] (vi) -- node[auto=false,pos=0.5] {$\cdots$} (vk);
\path
	(t) edge[arrow] (v1)
		edge[arrow, bend right] (vi)
		edge[arrow, bend right] (vk);
\path (v1) edge[arrow, bend right] (s);
\path (vi) edge[arrow, bend right] (s);
\path (vk) edge[arrow, bend right] (s);
\end{tikzpicture}
\caption{Let the only path between $s$ and $t$ be of length $\Theta(n/2) = \Theta(n)$,
not containing any of the vertices $v_\idxs$ for $1 \le \idxs \le k$. Each
$v_\idxs$ has only an edge from $t$ and an edge to $s$.
Let the Streett pairs $( L_\idxs, U_\idxs )$ be given by $L_\idxs = \{s\}$ and
$U_\idxs = \{v_\idxs\}$ for $1 \le \idxs \le k$.
Then the size of the smallest certificate is $\Theta(nk)$, where
$k$ can be of order $n$.}\label{fig:certificate}
\end{figure}

\subsection{Good Component Detection}\label{sec:alg}
In this subsection we present the algorithm for good component detection.
First we introduce the different concepts used in the algorithm for good 
component detection. We start with describing the hierarchical graph decomposition
technique for this setting, which is crucial for the running time 
analysis.

\smallskip\noindent\emph{Graph decomposition.}
In our algorithm we decompose a graph $G$ in the following way.
For $\idxg \in \set{1, \ldots, \lceil \log n \rceil}$, let $G_\idxg = (V, E_\idxg)$ 
be a subgraph of $G$ with $E_\idxg = \{(u,v) \mid \OutDeg(u) 
\le 2^\idxg \}$, i.e., the edges of $G_\idxg$ are the outedges of the vertices with 
outdegree at most $2^\idxg $. 
Note that for $\idxg = \lceil \log n \rceil$ we have that $G_\idxg = G$.
We say vertices in $G$ with $\OutDeg(v) > 2^\idxg $ are \emph{colored blue} in 
$G_\idxg$ and denote the set of blue vertices in $G_\idxg$ by $\blue_\idxg$. 
All other vertices are \emph{white}.
Note that all vertices in $G=G_{\lceil \log n \rceil}$ are white and that all 
vertices in $\blue_\idxg$ have outdegree zero in $G_\idxg$.

\smallskip\noindent\emph{Top and bottom strongly connected components.}
The algorithm repeatedly finds a top or a bottom SCC in the remaining
graph $G$. A bottom SCC $G[S]$ in a directed graph $G$, induced by some set 
of vertices $S$, is an SCC with no edges from vertices in $S$ to vertices 
in $V \setminus S$, i.e., no \emph{outgoing} edges.
A top SCC is a bottom SCC of $\rev$, i.e., an SCC without \emph{incoming}
edges. 
Note that every graph has at least one bottom and at least one top SCC.
If the graph is not strongly connected, then there exist a top and a bottom 
SCC that are disjoint and thus one of them contains at 
most half of the vertices of $G$. 

\smallskip\noindent\emph{Bad vertices.}
In contrast to \emph{good components} we also define \emph{bad vertices}.
The basic idea behind the algorithms for good component detection, described for example
in~\cite{HenzingerT96}, is to repeatedly delete \textit{bad} vertices 
until either a good component is found or it can be concluded that no such component exists. 
A vertex is \emph{bad} if for some index~$\idxs$ with
$1 \le \idxs \le k$ the vertex is in~$L_\idxs$ but it is not strongly connected
to any vertex of~$U_\idxs$. All other vertices are \emph{good}.
Note that good vertices can become bad if some vertex deletion disconnects
an SCS or a vertex of a set $U_\idxs$ is deleted. A good component is
a non-trivial SCS that only contains good vertices.

\smallskip\noindent\emph{Data structure.}
The algorithm maintains for the current graph $G=(V,E)$ (some vertices of the input 
graph might have been deleted) a decomposition into vertex sets $S\subseteq V$ such 
that every SCC of $G$ 
is completely contained in $G[S]$ for one of the sets $S$. For all the sets $S$ a 
data structure $\ds(S)$ is saved in a list $\liste$.
The data structure $\ds(S)$ supports the following operations:
(1) $\construct(S)$ initializes the data structure for the set $S$,
(2) $\remove(S, \ds(S), B)$ removes a set $B \subseteq V$ from $S$ and updates 
the data structure of $S$ accordingly, and
(3) $\bad(\ds(S))$ returns the set $\{v \in S \mid 
\exists \idxs \text{ with } v \in L_\idxs \text{ and } U_\idxs \cap S = \emptyset \}$.
In~\cite{HenzingerT96} an implementation of this data structure is described that achieves
the following running times. For a set of vertices $S \subseteq V$ let $\bits(S)$ 
be defined as $\sum_{\idxs=1}^{k} \left( \lvert S \cap L_\idxs \rvert + \lvert S \cap U_\idxs \rvert\right)$.
\begin{lemma}[Lemma 2.1 in \cite{HenzingerT96}]\label{lemma:ds}
After a one-time preprocessing of time $O(k)$, the data structure $\ds(S)$ can 
be implemented in time $O(\bits(S)+\lvert S \rvert)$ 
for $\construct(S)$, time $O(\bits(B)+\lvert B \rvert)$ for 
$\remove(S, \ds(S), B)$, and constant running time for $\bad(\ds(S))$.
\end{lemma}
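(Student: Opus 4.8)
The plan is to give a concrete implementation of $\ds(S)$ and to analyze it using the crucial observation that, within the lifetime of a single $\ds(S)$, vertices are \emph{only ever removed}: the quantity $\lvert U_\idxs \cap S\rvert$ is non-increasing, so each pair $\idxs$ can switch from \emph{satisfied} ($U_\idxs \cap S \neq \emptyset$) to \emph{violated} ($U_\idxs \cap S = \emptyset$) at most once. The data structure stores (i) for each pair $\idxs$ a counter $u_\idxs = \lvert U_\idxs \cap S\rvert$ together with doubly linked lists of the vertices in $U_\idxs \cap S$ and in $L_\idxs \cap S$; (ii) for each vertex $v \in S$ a counter $d_v$ equal to the number of currently violated pairs $\idxs$ with $v \in L_\idxs$; and (iii) an explicit doubly linked list of the \emph{bad} vertices, namely those with $d_v > 0$, with a back-pointer from each vertex to its node. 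Then $\bad(\ds(S))$ just returns a reference to that list in $O(1)$ time, and by construction $d_v > 0$ holds iff $v$ lies in some $L_\idxs$ with $U_\idxs \cap S = \emptyset$, which is exactly the required specification. The pair-indexed arrays are allocated once and accessed through the standard lazy (sparse-set) initialization scheme, so the one-time $O(k)$ preprocessing suffices and no single operation re-pays a $\Theta(k)$ initialization cost; the vertex-indexed data $d_v$ and the bad-node pointers can live in global arrays indexed by vertex, since the sets in $\liste$ are pairwise disjoint and every vertex belongs to exactly one of them.

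For $\construct(S)$ I would first scan $S$ once to set $d_v \leftarrow 0$ and clear bad-membership ($O(\lvert S\rvert)$), then make two incidence passes. The first iterates over every $v \in S$ and every pair $\idxs$ with $v \in U_\idxs$, incrementing $u_\idxs$ and appending $v$ to the $U_\idxs \cap S$ list, at cost $O(\sum_\idxs \lvert U_\idxs \cap S\rvert)$. The second iterates over every $v \in S$ and every $\idxs$ with $v \in L_\idxs$, appending $v$ to the $L_\idxs \cap S$ list and, whenever $u_\idxs = 0$, incrementing $d_v$; since the first pass is complete, an entry reads $u_\idxs = 0$ iff $U_\idxs \cap S = \emptyset$, so exactly the pairs violated in the initial $S$ are detected. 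A final pass inserts every $v$ with $d_v > 0$ into the bad list. As the two passes touch exactly $\bits(S)$ incidences and the scans are $O(\lvert S\rvert)$, the total is $O(\bits(S) + \lvert S\rvert)$.

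The operation $\remove(S, \ds(S), B)$ is where the real difficulty lies, and I would handle it decrementally. For each $v \in B$ I unlink $v$ from every $L_\idxs \cap S$ and $U_\idxs \cap S$ list it occurs in (via the incidence lists of $v$), decrement $u_\idxs$ for each $U$-incidence, and delete $v$ from the bad list if present; this part is $O(\bits(B) + \lvert B\rvert)$. The subtle step is the \emph{cascade}: when some $u_\idxs$ drops to $0$, pair $\idxs$ becomes newly violated, so every vertex still in $L_\idxs \cap S$ has its $d$-value raised by one and may turn bad. Upon such an event I would walk the stored list $L_\idxs \cap S$, increment $d_v$ for each listed vertex, and insert $v$ into the bad list whenever $d_v$ jumps from $0$ to $1$. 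The obstacle is that this walk costs $\Theta(\lvert L_\idxs \cap S\rvert)$, which is \emph{not} bounded by $\bits(B) + \lvert B\rvert$: a single removal can violate a pair whose $L$-side is arbitrarily large.

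The heart of the proof is to absorb the cascade cost by amortization. I would use the potential $\Psi = \sum_{\idxs :\, U_\idxs \cap S \neq \emptyset} \lvert L_\idxs \cap S\rvert$, taken over the pairs currently satisfied in $S$, and fund $\Psi \le \bits(S)$ at construction time (charging $\construct$ an extra $O(\bits(S))$, still within its bound). A cascade for a newly violated pair $\idxs$ does $\Theta(\lvert L_\idxs \cap S\rvert)$ real work but simultaneously drops that pair's entire contribution $\lvert L_\idxs \cap S\rvert$ out of $\Psi$, so its amortized cost is $O(1)$; the only other effect of a remove on $\Psi$ is the deletion of $L$-incidences of still-satisfied pairs, which can only decrease $\Psi$. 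Hence the amortized cost of $\remove(S, \ds(S), B)$ is $O(\bits(B) + \lvert B\rvert)$, as stated. Equivalently, one can charge each cascade touch of an incidence $(v, \idxs)$ with $v \in L_\idxs$ forward to the eventual removal of $v$, using monotonicity (each such incidence is hit by at most one cascade) together with $\sum_{v \in B} \lvert\{\idxs : v \in L_\idxs\}\rvert \le \bits(B)$. Since $d_v$ is maintained to equal exactly the number of violated pairs containing $v$ in $L$, the bad list is exactly correct after every operation, which gives the correctness of $\bad$ and completes the argument.
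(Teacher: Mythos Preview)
The paper does not give its own proof of this lemma: it is quoted verbatim as Lemma~2.1 of~\cite{HT1996} and used as a black box, so there is nothing in the present paper to compare your argument against. Your construction is a faithful reconstruction of the standard decremental data structure for this problem---per-pair counters $u_\idxs$ with membership lists, per-vertex violation counters $d_v$, and an explicit bad-list, with the cascade at the moment a pair becomes violated paid for by the potential $\Psi=\sum_{\idxs:\,U_\idxs\cap S\neq\emptyset}\lvert L_\idxs\cap S\rvert$ funded at $\construct$. That argument is correct and yields the stated (amortized) bounds.

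One small caveat: your ``equivalently'' alternative of charging each cascade touch of an $L$-incidence $(v,\idxs)$ forward to the eventual removal of $v$ is not quite equivalent, because $v$ need not ever be removed (the algorithm may terminate with $v$ still in $S$, for instance when $G[S]$ is output as a good component); those uncollected charges are precisely what the residual nonnegative potential absorbs. The potential-function version you give first is the sound one, and the forward-charging phrasing should be dropped or qualified.
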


\smallskip\noindent\emph{Structure of algorithm.} We denote by $G$ the \emph{current} graph maintained by the 
algorithm where some edges and vertices might have been deleted and use 
\emph{input graph} to denote the unmodified, strongly connected graph for which 
a good component is searched.
Our algorithm for good component detection is given in Algorithm~\ref{alg:streett}. 
It maintains in a list $\liste$ a partition of the vertices in $G$ into sets 
such that every SCC of $G$ is contained in the subgraph induced by one of the vertex 
sets. The list is 
initialized with the set of all vertices in the strongly connected input graph.
We show that if a good component exists, its vertices must be fully 
contained in one of 
the vertex sets in the partition. The algorithm repeatedly removes a set $S$ from 
$\liste$ and identifies and deletes bad vertices from $G[S]$.
If no edge is contained in $G[S]$, the set $S$ is removed as
it can only induce trivial SCCs.
Otherwise the subgraph $G[S]$ is either 
determined to be strongly connected and output as a good component or
a ``small'' SCC in $G[S]$ is identified.

\smallskip\noindent\emph{Search for small SCCs.}
To find a small SCC, the algorithm searches alternatingly in $G[S]$ 
and in $\rev[S]$ for a bottom SCC and stops as soon as one of the searches stops. 
(A bottom SCC in $\rev[S]$ is a top SCC in $G[S]$.) 
We only describe the search in $G[S]$ here, the search in $\rev[S]$ is analogous.
The algorithm uses the hierarchical graph decomposition of $G[S]$.
The subgraph $G_\idxg[S]$  for any $\idxg$ contains only the outedges
of vertices with an outdegree of at most $2^\idxg$. The search for a
bottom SCC is started at $\idxg = 1$, then $\idxg$ is increased one by one
if necessary, up to at most $\lceil \log (\lvert S \rvert) \rceil$.
If  for some $\idxg$ we can identify a bottom SCC that 
does not contain any blue vertex (i.e., a vertex for which some edges are 
missing in $G_\idxg$), then the found SCC in $G_\idxg[S]$ must also be a bottom 
SCC in $G[S]$. If multiple bottom SCCs (without blue vertices) are found in 
$G_\idxg[S]$, we only consider the smallest one. The Procedure~$\bsccalg(H')$ returns
the set of vertices that induces the smallest bottom SCC in the graph $H'$.
We then put the newly detected SCC and the ``rest'' of $S$ back into $\liste$.

\begin{algorithm2e}
\SetAlgoRefName{GoodComp}
\caption{Detection of good components for the winning set computation in graphs 
with $k$-pair Streett objectives}
\label{alg:streett}
\SetKwInOut{Input}{Input}
\SetKwInOut{Output}{Output}
\BlankLine
\Input{\emph{strongly connected graph} $G = (V, E)$ and \\
\emph{Streett pairs} $( L_\idxs, U_\idxs )$ for $\idxs = 1, \dotsc, k$}
\Output{a good component in $G$ if one exists}
\BlankLine
add $\construct(V)$ to $\liste$\;
\While{$\liste \ne \emptyset$}{ \label{l:while}
	pull $\ds(S)$ from $\liste$\label{l:pull}\;
	\While{$\bad(\ds(S)) \ne \emptyset$}{\label{l:badstart}
		$\ds(S) \leftarrow \remove(S, \ds(S), \bad(\ds(S)))$\label{l:badend}\;
	}
	\If{$G[S]$ contains at least one edge}{ \label{l:le2}
		\For{$\idxg \leftarrow 1$ \KwTo $\lceil \log (\lvert S \rvert) \rceil$}{\label{l:for}
			\ForEach{$H \in \set{G,\rev}$}{
				construct $H_\idxg[S]$\label{l:forl1}\;
				$\blue_\idxg \leftarrow \{v \in S \mid \OutDeg(H, v) > 2^\idxg\}$\;
				$Z \leftarrow S \setminus \reachG{H_\idxg[S]}{\blue_\idxg}$\;
				\tcc*[l]{$Z$ cannot reach $\blue_\idxg$} 
				\If{$Z \ne \emptyset$}{
					$\gc \leftarrow \bsccalg(H_\idxg[Z])$\label{l:scc}\;
					\If{$\gc = S$}{
						\tcc*[l]{good component found}
						\KwRet{$G[S]$}\label{l:stillconnected}\;
					}
					\If{$\lvert \gc \rvert \le \lvert S \rvert / 2$}{\label{l:smallerhalf}
						add $\remove(S, \ds(S), \gc)$ to $\liste$\label{l:addRest}\;
						add $\construct(\gc)$ to $\liste$\label{l:addX}\;
						continue with pull of next $\ds(S)$ from $\liste$ (Line~\ref{l:pull})\;
					}
				}
			}
		}
	}
}
\KwRet{no good component exists}
\end{algorithm2e}

\smallskip\noindent\emph{Outline running time.}
The idea of the running time analysis is as follows.
We can show that a bottom SCC of $G[S]$ identified in iteration~$\idxg$ of the 
outer for-loop 
must contain $\Omega(2^\idxg)$ vertices. In
time $O(n \cdot 2^\idxg)$ a standard SCC algorithm can compute all SCCs of $G_\idxg[S]$
and thus also the smallest bottom SCC. The time needed for
the search in all graphs $G_{{\idxg}'}[S]$ for $1 \le {\idxg}' < \idxg$ can be 
bounded by an additional factor
of two. Thus the work for the search is $O(n)$ per vertex in the identified SCC.

Given that the subgraph $G[S]$ was split into at least one top and one bottom SCC, 
the smallest top or bottom SCC contains at most half of the vertices of the 
subgraph. By searching for a smallest bottom SCC (without blue vertices) 
in $G_\idxg[S]$ and $\rev_\idxg[S]$, we find one top or bottom SCC with at most 
half of the vertices of the subgraph. We charge the work for finding
such an SCC to the vertices in this SCC. We show that this yields a 
total running time of~$O(n^2)$ for computing SCCs.

We additionally have to take the time for the maintenance of the data structures 
into account. Here we use the properties of the data structure $\ds(S)$ described
in Lemma~\ref{lemma:ds} to obtain a running time of $O((n + b) \log n)$ for the
maintenance of the data structures and the identification of bad vertices over the
whole algorithm. 
Combined these ideas lead to a total running time of $O(n^2 + b \log n)$.

\begin{lemma}\label{lemma:size}
Let $H \in \set{G,\rev}$ be the graph and let ${\idxg}^*$ be the iteration
for which in Algorithm~\ref{alg:streett} the outer for-loop stops.
Let $Z$ be the non-empty set $S \setminus \reachG{ 
H_{{\idxg}^*}[S]}{\blue_{{\idxg}^*}}$ and let $\gc$ be the set of vertices that induces the 
smallest bottom SCC $H[\gc]$ in $H_{{\idxg}^*}[Z]$ returned by 
$\bsccalg(H_{{\idxg}^*}[Z])$. Assume we have $ \lvert \gc \rvert \le \lvert S 
\rvert /2$. Then $H[\gc]$ contains at least $2^{{\idxg}^*-1}$ vertices.
\end{lemma}

\begin{proof}
As $\blue_{{\idxg}^*-1}$ is the set of vertices in $H_{{\idxg}^*-1}[S]$ with 
outdegree larger than $2^{{\idxg}^*-1}$, any bottom SCC $H[Y]$ that contains a vertex of $\blue_{{\idxg}^*-1}$, 
has $\lvert Y \rvert \ge 2^{{\idxg}^*-1}$.
Hence it suffices to show that $\gc \cap \blue_{{\idxg}^*-1} \ne \emptyset$.
Assume by contradiction that  $\gc \cap \blue_{{\idxg}^*-1} = \emptyset$.
Since $H[\gc]$ is a bottom SCC, 
no vertex
of $\gc$ can reach any vertex of $\blue_{{\idxg}^*-1}$, i.e.,
$\gc \subseteq S \setminus \reachG{H_{{\idxg}^*}[S]}{\blue_{{\idxg}^*-1}}$. As all edges in $H_{{\idxg}^*-1}[S]$
are contained in $H_{{\idxg}^*}[S]$, this implies 
$\gc \subseteq S \setminus \reachG{H_{{\idxg}^*-1}[S]}{\blue_{{\idxg}^*-1}}$.
Since $\bsccalg$ finds the smallest bottom SCC in graph $H_\idxg$ for each $\idxg$, the outer for-loop would thus
have terminated in an iteration $\idxg \le {\idxg}^*-1$. Contradiction. 
\end{proof}

\begin{lemma}[Running time]\label{lemm:timeStreett}
Algorithm~\ref{alg:streett} can be implemented in time $O(n^2 + b \log n)$.
\end{lemma}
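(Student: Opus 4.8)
The plan is to bound separately the two sources of work in Algorithm~\ref{alg:streett}: (i) the maintenance of the data structures $\ds(S)$ together with the repeated identification and deletion of bad vertices, and (ii) the lock-step \scc{} searches over the graph decomposition. For part~(i) I would invoke Lemma~\ref{lemma:ds}. The one-time preprocessing costs $O(k)$. Each $\construct(X)$ costs $O(\bits(X) + \lvert X \rvert)$ and each $\remove(S, \ds(S), B)$ costs $O(\bits(B) + \lvert B \rvert)$; calls to $\bad$ are constant. The key accounting device is that every vertex $v$ is placed into a newly constructed set only when the \scc{} it lands in has at most half the size of the set it came from (line~\ref{l:smallerhalf}), so $v$ participates in at most $O(\log n)$ \texttt{construct} operations over the whole run. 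Summing $\bits(X) + \lvert X\rvert$ over all these operations, and noting that the total of $\bits$ over the original input is $O(b)$ while the total of $\lvert \cdot \rvert$ is $O(n)$, each amortized over $O(\log n)$ charges, gives $O((n+b)\log n)$ for all data-structure work, which is absorbed by the claimed bound.

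Next I would handle the \scc{}-search work, which is where Lemma~\ref{lemma:size} is essential. Fix one pass of the \texttt{while} loop that pulls a set $S$ and suppose the lock-step search stops at level $\idxg^*$, returning a top or bottom \scc{} $H[X]$ with $\lvert X\rvert \le \lvert S\rvert/2$. Constructing $H_\idxg[S]$, computing $\blue_\idxg$, the reachability set $\reach(\blue_\idxg, H_\idxg[S])$, and running $\sccalg$ on $H_\idxg[Z]$ each take time proportional to the number of edges in $H_\idxg[S]$, which is $O(n \cdot 2^\idxg)$ since every retained vertex has outdegree at most $2^\idxg$. Summing the work over levels $\idxg = 1, \dots, \idxg^*$ forms a geometric series dominated by its last term, so the total cost of this pass is $O(n \cdot 2^{\idxg^*})$. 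Because the lock-step search advances the two instances in tandem and stops when the \emph{first} one stops, the other direction has done no more than this (up to a factor of two), so the total search cost for the pass remains $O(n \cdot 2^{\idxg^*})$. By Lemma~\ref{lemma:size} we have $\lvert X\rvert \ge 2^{\idxg^*-1}$, hence $n \cdot 2^{\idxg^*} = O(n \lvert X\rvert)$, i.e.\ the search cost is $O(n)$ per vertex of $X$. I would charge this $O(n)$ to each vertex of the returned \scc{} $X$.

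To turn these per-pass charges into a global bound I would again use the halving property: the charged \scc{} $X$ is put back into $\liste{}$ via $\construct(X)$, and since $\lvert X\rvert \le \lvert S\rvert /2$, each vertex can be charged $O(n)$ search time at most $O(\log n)$ times over the entire execution. Therefore the total \scc{}-search work is $O(n \cdot n \cdot \log n) = O(n^2 \log n)$. I would also separately note that the one terminal pass in which $\sccalg$ returns $X = S$ (line~\ref{l:stillconnected}, output of a good component) and passes that only delete bad vertices without splitting contribute nothing beyond the already-bounded data-structure and reachability work. Combining the $O((n+b)\log n)$ for part~(i) with the $O(n^2\log n)$ for part~(ii) yields the claimed $O((n^2+b)\log n)$.

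The main obstacle is making the amortization rigorous, specifically verifying that the halving in line~\ref{l:smallerhalf} genuinely guarantees the $O(\log n)$ bound on the number of times any fixed vertex is charged. This requires checking that \emph{every} time the set containing a vertex $v$ is rebuilt, its size has strictly at least halved relative to the previous containing set of $v$ — which holds because a vertex is only reinserted inside the smaller \scc{} $X$ (with $\lvert X\rvert \le \lvert S\rvert/2$), or stays in the complementary remainder $\remove(S,\ds(S),X)$ which is also of size at most $\lvert S\rvert$ but whose further splits are governed by the same halving rule. The subtle point is that the remainder is \emph{not} necessarily at most half the size of $S$, so the charging scheme must be set up so that the $O(n)$ search cost is attributed only to the vertices of the smaller side $X$, while the vertices in the remainder are charged only on subsequent passes when \emph{they} end up on a small side; justifying that this still yields $O(\log n)$ charges per vertex is the crux of the argument.
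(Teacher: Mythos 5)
Your proposal is correct and follows essentially the same argument as the paper's proof: the same split into data-structure work bounded via Lemma~\ref{lemma:ds} versus \scc{}-search work, the same geometric-series bound of $O(n \cdot 2^{\idxg^*})$ per pass, the same use of Lemma~\ref{lemma:size} to charge $O(n)$ only to the vertices of the small side $X$, and the same halving argument giving $O(\log n)$ charges per vertex --- including the subtlety you flag, which the paper resolves exactly as you do (a vertex is charged only when its containing set halves). The only detail you gloss over, and the paper makes explicit, is maintaining in/out adjacency lists with cross-pointers under vertex deletions in $O(n^2)$ total time so that each $H_\idxg[S]$ can actually be built in $O(n \cdot 2^\idxg)$ time.
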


\begin{proof}
The preprocessing and initialization of the data structure and the removal 
of bad vertices in the whole algorithm take time $O(m + k + b)$ using
Lemma~\ref{lemma:ds}.
Additionally we maintain 
at each vertex a list of its incoming and a list of its outgoing edges 
including pointers to the lists of its neighbors, which we
use  to update the lists of its neighbors. Since each vertex is
deleted at most once, this data structure can be constructed and maintained in 
total time $O(m)$.

Consider the while loop where a set $S$ is removed from $\liste$.
If $G[S]$ does not contain any edge after the removal of bad vertices,
then $S$ is not considered further by the algorithm. Otherwise $G[S]$
and $\rev[S]$ are search for bottom SCCs.
The search in $G[S]$ and $\rev[S]$ only increases the running time 
by a factor of two, thus we restrict the analysis of the running time to $G[S]$.
Let $n' \le n$ be the number of vertices in~$S$.
The construction of $ G_\idxg[S]$, $Z$, and $G[\gc]$ can all be done in time 
$O(n' \cdot 2^\idxg)$ for each~$\idxg$, i.e., in total time 
$O(n' \cdot  2^{{\idxg}^*})$ up to 
level~$\idxg^*$. If $\gc=S$, then the algorithm terminates
and the time for processing $S$ can be bounded by $O(n' \cdot 2^{\log n'}) = O((n')^2)$.
If the processing of~$S$ ends when some bottom SCC $G[\gc] \subseteq G[S]$ 
induced by some set of vertices $\gc$ is found,
let $\idxg^*$ be the value of $\idxg$ when $G[\gc]$ is detected and inserted 
into $\liste$, and let $c$ be some constant such that the time spent in this 
search for $\gc$ is bounded by $c \cdot n' \cdot 2^{{\idxg}^*-1}$.
By Lemma~\ref{lemma:size} the set~$\gc$ contains at least $2^{{\idxg}^*-1}$ vertices.
Let $\lvert \gc \rvert = n_1$. The algorithm ensures $n_1 \le n' / 2$. 
We claim that the total running time for processing all sets $S$, except for 
the work in $\remove$ and $\construct$, can be bounded by $f(n) = 2 c n^2$.
Whenever the algorithm does not terminate, we have by induction,
and in particular for $n' = n$,
\begin{align*}
	f(n') &\le f(n_1) + f(n' - n_1) + c n' n_1 \,,\\
	&\le 2c n_1^2 + 2c (n'-n_1)^2 + c n' n_1 \,,\\
	&= 2c n_1^2 + 2c (n')^2 - 4c n' n_1 + 2c n_1^2 + c n' n_1 \,,\\
	&= 2c (n')^2 + 4c n_1^2 - 3c n' n_1 \,,\\
	&\le 2c (n')^2 \,,
\end{align*}
where the last inequality follows from $n_1 \le n'/2$.

The operations $\remove$ and $\construct$ are called once per found bottom SCC 
$G[\gc]$ with $\gc \ne S$ and take by Lemma~\ref{lemma:ds} $O(\lvert \gc \rvert +
\bits(\gc))$ time. By $n_1 \le n' / 2$ we have that whenever a vertex~$v$ is 
in~$\gc$, the size of the set in $\liste$ containing $v$ is halved; this can 
happen at most $\lceil \log n \rceil$ times.
Hence, by charging $O(1)$ to the vertices in $\gc$ and, respectively, to $\bits(\gc)$, 
the total running time for this part can be bounded by $O((n + b) \log n)$, as 
each vertex and bit is only charged $O(\log n)$ times.
Combining all parts yields the claimed running time bound of $O(n^2 + b \log n)$.
\end{proof}

To prove the correctness of Algorithm~\ref{alg:streett},
we first show that all candidates for good components are in~$\liste$ before
each iteration of the algorithm.
\begin{lemma}\label{lemma:allcandidates}
Before each iteration of the outer while-loop every good component of the input
graph is contained in one of the subgraphs~$G[S]$ for which the data structure
$\ds(S)$ is maintained in the list~$\liste$.
\end{lemma}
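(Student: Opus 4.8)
The plan is to prove the statement as a loop invariant by induction on the iterations of the outer while-loop, where the invariant is precisely the claim: before every iteration, each good component $G[\Gamma]$ of the input graph satisfies $\Gamma \subseteq S$ for one of the vertex sets $S$ with $\ds(S)\in\liste$ (recall that the sets in $\liste$ form a partition of the remaining vertices). For the base case, $\liste$ holds only $\ds(V)$ and the input graph is strongly connected, so every good component $G[\Gamma]$, having $\Gamma \subseteq V$, is trivially contained in the single set $V$. For the inductive step I would fix an arbitrary good component $G[\Gamma]$ and an iteration that pulls a set $S$ from $\liste$, and track where $\Gamma$ ends up. Since one iteration only modifies the pulled set $S$ (removing it and reinserting at most the two sets $S \setminus X$ and $X$), if $\Gamma$ is contained in some set other than $S$, that set is left untouched and the invariant is preserved; so the only interesting case is $\Gamma \subseteq S$.

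For $\Gamma \subseteq S$ the argument splits into the bad-vertex removal and the \scc{} split. First I would show that the inner bad-removal loop never deletes a vertex of $\Gamma$, using the defining property of a good component that for every index $\idxs$ either $L_\idxs \cap \Gamma = \emptyset$ or $U_\idxs \cap \Gamma \neq \emptyset$. Consider the first moment at which a vertex $v \in \Gamma$ would be removed; at that moment all of $\Gamma$ is still in $S$, and $v \in \bad(\ds(S))$ means $v \in L_\idxs$ with $U_\idxs \cap S = \emptyset$ for some $\idxs$. But then $L_\idxs \cap \Gamma \neq \emptyset$, so $U_\idxs \cap \Gamma \neq \emptyset$, and since $\Gamma \subseteq S$ this forces $U_\idxs \cap S \neq \emptyset$, a contradiction. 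Hence $\Gamma$ remains fully inside $S$ after the bad-removal loop. Because $G[\Gamma]$ is a non-trivial \scc{} whose cycle edges lie among vertices of $\Gamma \subseteq S$, the induced graph $G[S]$ contains at least one edge, so the algorithm enters the if-block rather than discarding $S$.

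Second I would handle the split. If the lock-step search returns with $X = S$, the algorithm terminates and outputs $G[S]$, so there is no subsequent iteration and nothing to prove. Otherwise it identifies $X \subsetneq S$ which—as established in the algorithm description and Lemma~\ref{lemma:size}—is a bottom or top \scc{} of $G[S]$, hence a maximal \scc{} of $G[S]$, and reinserts the sets $X$ and $S \setminus X$ into $\liste$. Since $\Gamma \subseteq S$ and $G[S]$ retains all edges among vertices of $\Gamma$, the set $\Gamma$ is strongly connected in $G[S]$ and therefore contained in a single maximal \scc{} of $G[S]$. As $X$ is a maximal \scc{} and distinct maximal \scc{s} are disjoint, either $\Gamma \subseteq X$ or $\Gamma \cap X = \emptyset$, the latter giving $\Gamma \subseteq S \setminus X$; either way $\Gamma$ lies in one of the two reinserted sets, completing the inductive step. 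I expect the main obstacle to be the bad-removal part: because $\bad(\ds(S))$ is defined relative to the current set $S$ and the removals are iterative, one must argue carefully that a good component, once entirely inside $S$, shields each of its $L_\idxs$-vertices through its own witness in $U_\idxs$, so that no cascade of deletions can ever reach it; the \scc{} split itself is then a clean consequence of $X$ being maximal.
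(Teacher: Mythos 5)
Your proof is correct and follows essentially the same route as the paper's: initialization covers all good components, the defining property $L_\idxs \cap \Gamma = \emptyset$ or $U_\idxs \cap \Gamma \neq \emptyset$ shows the bad-vertex loop never touches $\Gamma$, non-triviality of $\Gamma$ prevents the edge-count discard, and maximality of the top/bottom \scc{} $X$ in $G[S]$ forces $\Gamma \subseteq X$ or $\Gamma \subseteq S \setminus X$. Your first-violation argument for the cascading removals is a slightly more explicit rendering of what the paper states in one line, but it is the same proof.
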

\begin{proof}
We show that the algorithm never removes edges or vertices that belong to
a good component, which together with a correct initialization of the list~$\liste$
implies the lemma.
At the beginning of the algorithm one data structure for the whole strongly connected
input graph is added to~$\liste$. Thus every good component is contained in
this data structure in~$\liste$ after the initialization. 
At the beginning of each iteration of 
the outer while-loop the data structure of one of the subgraphs $G[S]$
is pulled from the list~$\liste$. In Lines~\ref{l:badstart}--\ref{l:badend} we 
remove vertices from the subgraph that are in some set $L_\idxs$ but not 
strongly connected to any vertex in $U_\idxs$, i.e., bad vertices.
In Line~\ref{l:le2} we remove trivial SCCs.
Observe that a good component is non-trivial and does not
contain any bad vertices. Thus the removal of bad vertices
and trivial SCCs does not remove any vertices of a good component,
i.e., after the removal of these vertices
the updated subgraph $G[S]$ still contains the good components it contained 
before. If no good component is identified in this iteration, i.e., the algorithm
does not terminate, we find a bottom or top SCC $G[\gc]$, induced by some
set of vertices~$\gc$. Since a good component is strongly connected, every good 
component in $G[S]$ either is a subgraph of the newly identified SCC $G[\gc]$
or does not contain \emph{any} vertex of~$\gc$. Thus the removed edges between $G[\gc]$ 
and the remaining subgraph cannot belong to a good component. Finally, we add the 
data structures for $G[\gc]$ as well as for $G[S \setminus \gc]$ to $\liste$.
Thus no vertex or edge of a good component was removed and every good component
continues to be completely contained in a subgraph in~$\liste$.
\end{proof}

As all candidates for good components are maintained in the list $\liste$,
it remains to show that the algorithm  
correctly outputs a good component if and only if one exists. 
\begin{lemma}[Correctness] \label{lemm:correctStreett}
Algorithm~\ref{alg:streett} outputs a good component if one exists,
otherwise the algorithm reports that no such component exists.
\end{lemma}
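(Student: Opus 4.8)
The plan is to prove three claims and combine them: (i) the algorithm always terminates; (ii) whenever it returns a subgraph $H[S]$ at Line~\ref{l:stillconnected} that subgraph is a good component (soundness); and (iii) whenever a good component exists the algorithm returns one rather than reporting that none exists (completeness). The backbone of the argument is Lemma~\ref{lemma:allcandidates}, which I would use as an invariant: before every iteration of the outer while-loop, each good component of the input graph lies entirely inside one of the sets $S$ currently stored in $\liste{}$.

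For termination I would track the nonnegative integer potential $\Phi = \sum_{S \in \liste{}} \lvert S \rvert^2$. A single iteration pulls one set from $\liste{}$ and either halts (output), discards the set (when $G[S]$ has no edges), or re-inserts two sets $X$ and $S \setminus X$ with $X$ a nonempty proper subset of the (bad-vertex-reduced) set $S$; since $\lvert X \rvert^2 + \lvert S \setminus X \rvert^2 < \lvert S \rvert^2$ in that case, $\Phi$ strictly decreases in every non-halting iteration, forcing termination. Here I must also argue that each iteration genuinely makes progress, i.e.\ the lock-step search never gets stuck: once bad vertices are removed, either $G[S]$ has no edge, or at level $\idxg=\log n$ all vertices are white so $\sccalg$ returns the smallest bottom \scc{} of $H[S]$; if $G[S]$ is strongly connected this is $X=S$ (output), and otherwise $G[S]$ has disjoint top and bottom \scc{s}, so the smaller of the smallest top and smallest bottom \scc{} has at most $\lvert S\rvert/2$ vertices and triggers the break. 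A point I would make explicit is that $Z = S \setminus \reach(\blue_\idxg, H_\idxg[S])$ contains no blue vertex (a blue vertex reaches itself), so any bottom \scc{} that $\sccalg(H_\idxg[Z])$ returns is all-white with all its out-edges present in $H_\idxg$, and is therefore a genuine maximal top/bottom \scc{} of $G[S]$ rather than merely of the sparsified graph.

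For soundness, suppose Line~\ref{l:stillconnected} fires with $X=S$. Then $Z=S$, which forces $\blue_\idxg=\emptyset$ and $H_\idxg[S]=H[S]$, and $X=S$ means $H[S]$ is a single \scc{} spanning $S$; since $H\in\{G,\rev\}$ this makes $G[S]$ strongly connected. The guard at Line~\ref{l:le2} ensures $G[S]$ has an edge, so it is non-trivial, and the inner loop at Line~\ref{l:bad} ran until $\bad(\ds(S))=\emptyset$, which by definition means that for every $\idxs$ either $L_\idxs\cap S=\emptyset$ or $U_\idxs\cap S\neq\emptyset$. Hence $G[S]$ is a non-trivial \scc{} meeting the good-component condition.

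For completeness I would combine the three pieces. If a good component exists, Lemma~\ref{lemma:allcandidates} keeps $\liste{}$ nonempty after every iteration, so the while-loop can never exit through its $\liste{}=\emptyset$ guard and can never reach the closing ``no good component exists'' statement; since the algorithm nevertheless terminates, it must halt at Line~\ref{l:stillconnected}, which by soundness outputs a genuine good component. If instead no good component exists, soundness guarantees the algorithm never outputs, so it eventually empties $\liste{}$ and correctly reports that none exists. I expect the main obstacle to be the progress/soundness interface in the split step---certifying that the returned $X$ is a true maximal \scc{} of $G[S]$ and that a qualifying \scc{} of size at most $\lvert S\rvert/2$ always exists when $G[S]$ is not strongly connected---both of which rest on the no-blue-vertex property of $Z$ and on the existence of disjoint top and bottom \scc{s}.
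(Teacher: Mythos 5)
Your proof is correct, and it shares the paper's backbone---Lemma~\ref{lemma:allcandidates} as the invariant, plus the same soundness check at Line~\ref{l:stillconnected} (non-triviality from the guard at Line~\ref{l:le2}, the Streett condition from $\bad(\ds(S))=\emptyset$ after Line~\ref{l:bad}, strong connectivity from $X=S$ forcing $Z=S$ and $\blue_\idxg=\emptyset$)---but your completeness step takes a genuinely different route. The paper fixes a \emph{maximal} good component $Y$, tracks the host set $S_Y$ through successive pulls, and argues by case analysis that, since $Y$ is never split, after at most $n$ pulls the bad-vertex-reduced host set equals $Y$, whose only top or bottom \scc{} is $Y$ itself, which is then returned. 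You instead argue abstractly: the potential $\sum_{S\in\liste{}}\lvert S\rvert^2$ forces termination; the invariant keeps $\liste{}$ nonempty whenever a good component exists, so the while-guard can never fail; hence the algorithm must halt at the return, which soundness certifies. This is cleaner---it avoids the paper's bookkeeping about how $S_Y$ shrinks to $Y$---at the price of needing an explicit progress claim (every non-halting iteration discards a set or splits one, i.e., the lock-step for-loop always breaks or returns), which the paper leaves implicit in asserting that trivial-\scc{} removals and splits each occur at most $n$ times. Your two supporting observations close exactly the gaps this creates, and both are sound: at $\idxg=\log n$ all vertices are white and $Z=S$, and if $G[S]$ is not strongly connected its condensation has at least two nodes, so there exist two \emph{distinct} (hence disjoint) top and bottom \scc{s}, the smaller of size at most $\lvert S\rvert/2$, guaranteeing a break in at least one of the two lock-step instances; and since every vertex of $Z$ is white and cannot reach $S\setminus Z$ in $H_\idxg[S]$, every bottom \scc{} of $H_\idxg[Z]$ is a genuine bottom (for $H=\rev$, top) \scc{} of $H[S]$, so the split at Line~\ref{l:addRest} never cuts through an \scc{}---a fact the paper asserts without proof inside Lemma~\ref{lemma:allcandidates}. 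In sum, both arguments deliver the theorem; the paper's is more concrete (it exhibits which component is output), while yours is shorter on completeness and more rigorous about the algorithm's progress and about why the returned $X$ is a maximal \scc{} of $G[S]$ rather than of the sparsified graph.
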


\begin{proof}
First we show that whenever Algorithm~\ref{alg:streett} outputs a subgraph $G[S]$
induced by some set of vertices~$S$, then $G[S]$ is a good component.
Line~\ref{l:le2} ensures only non-trivial SCSs are considered. 
After the removal of bad vertices from~$S$ in Lines~\ref{l:badstart}--\ref{l:badend},
we know that for 
all $1 \le \idxs \le k$ and all vertices in $S \cap L_\idxs$ there exists a vertex in 
$S \cap U_\idxs$. Thus if $G[S]$ is strongly connected, then $G[S]$ is a good 
component, and the only SCC in $G[S]$ is $G[S]$ itself. Only in this case $G[S]$ 
is output (in Line~\ref{l:stillconnected}).

Algorithm~\ref{alg:streett} terminates if a good component is identified or
$\liste$ is empty; in the latter case it reports that 
no good component exists.
Lemma~\ref{lemma:allcandidates} shows that before every 
iteration of the outer while-loop \emph{every} good component is contained in 
one of the subgraphs $G[S]$ in $\liste$. That is, if a good component exists
in $G$ and no good component was identified yet by the algorithm, then 
$\liste$ is not empty and
thus the algorithm does not terminate until a good component is identified.
Hence if the algorithm terminates because $\liste$ is empty, then no 
good component exists.
By Lemma~\ref{lemm:timeStreett} the algorithm terminates after a finite number 
of steps.

Next we show that if there exists a good component in $G$, then 
the algorithm outputs a good component.
Let~$Y$ be a {\em maximal} good component in~$G$ 
and let~$S_Y$ be the vertex set maintained in $\liste$
that currently contains the vertices in~$Y$. By the arguments above after a 
finite number of steps either (1) another good component is detected or
(2) $\ds(S_Y)$ is pulled from $\liste$. In Case~(1) we are done, the argument 
for Case~(2) is as follows. 
By Lemma~\ref{lemma:allcandidates} the component~$Y$ is never split
by the algorithm thus after Case~(2) happened at most $n$ times, one of 
the following two cases occurs: either (a) $\ds(S_Y)$ is pulled from~$\liste$
with $G[S_Y] \supset Y$ and after the removal of
bad vertices from $S_Y$, $G[S_Y]$ without the bad vertices is equal to $Y$ or (b) $G[S_Y] = Y$ is pulled 
from $\liste$. In both cases the good component $Y$ is output and the algorithm
terminates: Since $Y$ is non-trivial, the condition in Line~\ref{l:le2} is satisfied.
The algorithm searches for a top or bottom SCC in~$Y$. Since~$Y$ is 
strongly connected, the only top or bottom SCC in~$Y$ is~$Y$ itself. Hence 
the algorithm outputs~$Y$ in Line~\ref{l:stillconnected}. 
\end{proof}

Recall Algorithm~\streettalg that calls Algorithm~\ref{alg:streett} for each
SCC in the input graph and then computes reachability to 
the union of the identified good components. Lemmata~\ref{lemm:timeStreett} 
and~\ref{lemm:correctStreett} yield the following result.

\begin{theorem}
Algorithm~\streettalg correctly computes the winning set in graphs with $k$-pair Streett
objectives in $O(n^2+b\log n)$ time. Given a vertex $x$ in the winning set
and a good component reachable by $x$, a certificate for $x$ can be output in 
time $O(m+ n \min(n,k))$.
\end{theorem}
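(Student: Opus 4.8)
The plan is to establish the two assertions of the theorem separately: first the winning-set computation, then the certificate output.

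For the winning set, I would appeal to the reduction of~\cite{AlurHenzingerBook} recalled in Section~\ref{sec:pre}, which says that the three steps of Algorithm~\streett{}---maximal \scc{} decomposition, good-component detection in each maximal \scc{}, and reachability to the satisfying maximal \scc{s}---correctly compute the winning set, \emph{provided} the good-component detection is correct. The correctness of good-component detection is exactly Lemma~\ref{lemm:correctStreett}, so correctness of the overall computation follows. For the running time, the first and the last step are linear, so the cost is dominated by the calls to Algorithm~\ref{alg:streett}, one per maximal \scc{}. Writing $n_\ell$ and $b_\ell$ for the vertex count and the bit-count of the $\ell$-th maximal \scc{}, the maximal \scc{s} are vertex-disjoint, so $\sum_\ell n_\ell \le n$ and $\sum_\ell b_\ell \le b$. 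By Lemma~\ref{lemm:timeStreett} the $\ell$-th call costs $O((n_\ell^2 + b_\ell)\log n)$, and the key step is the observation $\sum_\ell n_\ell^2 \le (\sum_\ell n_\ell)^2 \le n^2$, which telescopes the total to $O((n^2 + b)\log n)$.

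For the certificate, given $x$ in the winning set together with a good component $G[\gc{}]$ reachable from $x$ (obtained from the winning-set computation), I would first find a path from $x$ into $\gc{}$ by a depth-first search in $O(m)$ time, letting $v \in \gc{}$ be its endpoint, which I call the root. Next I would run Tarjan's \scc{} algorithm~\cite{Tarjan1972} on $G[\gc{}]$ from $v$ to build a jungle (a spanning tree plus at most one backedge per vertex) and, at each vertex, a backlink pointing along the first edge of a tree-then-backedge path to its lowlink; this takes $O(m)$ time. For each index $\idxs$ with $U_\idxs \cap \gc{} \ne \emptyset$ I would pick one vertex $u \in U_\idxs \cap \gc{}$ and form a closed walk at $v$ by descending the spanning tree to $u$ and then following backlinks back to $v$, each such walk visiting no vertex more than twice and thus costing $O(\lvert \gc{} \rvert)$. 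Concatenating these closed walks, prefixed by the $x$-to-$v$ path, yields a lasso; since after bad-vertex removal $L_\idxs \cap \gc{} \ne \emptyset$ implies $U_\idxs \cap \gc{} \ne \emptyset$, the loop contains a vertex of $U_\idxs$ for every $L_\idxs$ meeting $\gc{}$, so the lasso is an accepting run. The total cost is $O(m + \lvert \gc{} \rvert \min(\lvert \gc{} \rvert, k)) = O(m + n\min(n,k))$.

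The deep lemmas are already in place, so the main obstacle is purely in the assembly and amounts to two checks. First, I must confirm that the per-\scc{} bounds of Lemma~\ref{lemm:timeStreett} genuinely sum as claimed---in particular that the uniform factor $\log n$ (rather than $\log n_\ell$) is legitimate and that $\sum_\ell n_\ell^2 \le n^2$ captures the quadratic term correctly. Second, I must verify that the jungle-based lasso is a valid accepting run: that following backlinks from any $u$ always returns to the root, that the concatenation of the per-$U_\idxs$ closed walks is a single closed walk through $v$, and that its set of infinitely-often-visited vertices meets every relevant $U_\idxs$; this is where the good-component property of $\gc{}$ is used essentially.
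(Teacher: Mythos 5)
Your proposal is correct and follows essentially the same route as the paper: correctness and the $O((n^2+b)\log n)$ bound are assembled from Lemma~\ref{lemm:correctStreett} and Lemma~\ref{lemm:timeStreett} via the three-step reduction of~\cite{AlurHenzingerBook}, and the certificate is built exactly as in the appendix, from a DFS path to a root $v$ plus Tarjan-jungle backlink walks, one per nonempty $U_\idxs \cap \gc{}$, using the good-component property to cover every $L_\idxs$ met by the loop. Your explicit per-\scc{} summation ($\sum_\ell n_\ell \le n$, $\sum_\ell b_\ell \le b$, $\sum_\ell n_\ell^2 \le n^2$) merely spells out what the paper leaves implicit, and is valid.
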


\begin{remark}[Optimality]
We have shown that in a graph with $k$-pair Streett objectives the winning set and 
a certificate can be computed and output in time $O(n^2+b\log n)$.
Example~\ref{ex:lowerbound} shows a lower bound of $\Omega(n  \min(n, k))$ 
for outputting a certificate. Note that the size of the input is at least $b$.
Hence the presented algorithm is optimal up to a 
log factor when $k=\Omega(n)$ and a certificate is required.
\end{remark}

\section{Conclusion}
In this work we have considered two classical algorithmic questions for parity and 
Streett objectives on graphs and game graphs. 

We have presented an algorithm for parity games with $n$ vertices and $\numprio$
priorities with a running time of roughly $O(n^{4/3+\numprio/3})$, which improves
the running time over previous results for game graphs with
$\Omega(n^{4/3})$ edges when the number of priorities is sub-polynomial in $n$.
In particular we improved the long standing running time for 3~priorities from 
$O(mn)$ to $O(n^{5/2})$.

For graphs with Streett objectives we have presented an $
O(n^2 + b \log n)$-time algorithm (where $b$ is the total number of elements in 
the Streett pairs), and a lower bound 
that shows that this running time is tight up to a log factor when a certificate
has to be reported. The algorithm improves upon the known running time bounds 
when the number of edges~$m$ is at least of order $n^{4/3} \log^{-1/3} n 
+ b^{2/3} \log^{1/3} n$ and the number of Streett pairs is at least of order 
$n^2/m$.
This algorithm was extended to Markov Decision Processes (MDPs) in subsequent work~\cite{ChatterjeeDHL16,Loitzenbauer16}.

In particular in the light of the quasi-polynomial time algorithm for parity 
games in a breakthrough result subsequent to our work~\cite{CaludeJKLS17}, 
showing new upper and (conditional) lower bounds for parity games remains
a very interesting challenge. The techniques of \cite{CaludeJKLS17} as well as the 
techniques presented here are inherently non-symbolic. An interesting open 
question is thus to find improved symbolic algorithms for these classical problems.

\section*{Acknowledgement}
\noindent We would like to thank Tom Henzinger and the anonymous 
referees for their useful comments.


\bibliographystyle{plain}
\bibliography{../literature}

\vspace{-20 pt}
\end{document}